\newtheorem{theorem}{Theorem}[section]
\newtheorem{lemma}[theorem]{Lemma}
\newtheorem{proposition}[theorem]{Proposition}
\theoremstyle{definition}
\newtheorem{definition}{Definition}
\newtheorem{remark}{Remark}
\theoremstyle{remark}
\newcommand{\cut}[1]{}
\def\E{\mathbb{E}}
\def\R{\mathbb{R}}
\def\a{\alpha}
\def\b{\beta}
\def\g{\gamma}
\def\e{\varepsilon}
\def\th{\theta}
\def\k{\kappa}
\def\la{\langle}
\def\ra{\rangle}
\newcommand{\ip}[1]{{\langle #1 \rangle}}
\def\Span{\text{Span}}
\def\sse{\subseteq}
\newcommand{\initOneLiners}{%
    \setlength{\itemsep}{0pt}
    \setlength{\parsep }{0pt}
    \setlength{\topsep }{0pt}
}
\title{Dimension-Free Bounds for Chasing Convex Functions\thanks{This
    research was supported in part by NSF award CCF-1907820.}}
\author[1]{C.J.~Argue} 
\author[1]{Anupam Gupta}
\author[2]{Guru Guruganesh}
\affil[1]{Carnegie Mellon University}
\affil[2]{Google Research}
\newcommand{\gr}{\nabla}
\newcommand{\grad}{\nabla}
\newcommand{\cost}{\mathsf{cost}}
\renewcommand{\k}{\kappa}
\renewcommand{\th}{\theta}
\renewcommand{\b}{\beta}
\newcommand{\MtoM}{\ensuremath{\mathsf{M2M}}\xspace}
\newcommand{\OBD}{\ensuremath{\mathsf{OBD}}\xspace}
\newcommand{\COBD}{\ensuremath{\mathsf{COBD}}\xspace}
\newcommand{\DTPhi}{\widetilde{\Delta}_t \Phi} %
  \def\\{}%
  \def\texttt#1{<#1>}%
  \def\textsf#1{<#1>}%
  \def\mathsf#1{<#1>}%
  \def\ensuremath#1{#1}%
  \def\xspace{}%
  \def\Cref#1{<Label:#1>}%
  \def\eqref#1{<Eq.:#1>}%
\begin{document}

\maketitle

\begin{abstract}
  We consider the problem of chasing convex functions, where functions
  arrive over time. The player takes actions after seeing the
  function, and the goal is to achieve a small function cost for these
  actions, as well as a small cost for moving between actions. While
  the general problem requires a polynomial dependence on the
  dimension, we show how to get dimension-independent bounds for
  well-behaved functions. In particular, we consider the case where
  the convex functions are $\k$-well-conditioned, and give an
  algorithm that achieves an $O(\sqrt \k)$-competitiveness. Moreover,
  when the functions are supported on $k$-dimensional affine
  subspaces---e.g., when the function are the indicators of some
  affine subspaces---we get $O(\min(k, \sqrt{k \log T}))$-competitive 
  algorithms for request sequences of length $T$. We also
  show some lower bounds, that well-conditioned functions require  
  $\Omega(\k^{1/3})$-competitiveness, and that $k$-dimensional functions
  require $\Omega(\sqrt{k})$-competitiveness.
\end{abstract}

\section{Introduction}

We consider the \emph{convex function chasing} (CFC) problem defined
by~\cite{FL93}, and independently studied under the name \emph{smooth online
  convex optimization} (SOCO) by~\cite{LLWA12,DBLP:journals/ton/LinWAT13}. In this problem, an
online player is faced with a sequence of convex functions over time,
and has to choose a good sequence of responses to incur small
function costs while also minimizing the movement cost for switching
between actions. Formally, the player starts at some initial
default action $x_0$, which is usually modeled as a point in $\R^d$. Convex functions $f_1, f_2, \dots$ arrive
online, one by one. Upon seeing the function $f_t: \R^d\to \R^+$, the player must
choose an action $x_t$. The cost incurred by this action is
\[ \|x_t - x_{t-1}\|_2 + f_t(x_t), \] the former
Euclidean distance term being the \emph{movement} or \emph{switching} cost between the
previous action $x_{t-1}$ and the current action $x_t$, and the latter
function value term being the \emph{hit cost} at this new action
$x_t$. 
(he problem can be defined for general metric spaces; in this paper we study the Euclidean case.) 
Given some sequence of functions
$\sigma = f_1, f_2, \ldots, f_T$, the online player's total cost for the associated
sequence $X = (x_1, x_2, \ldots, x_T)$ is
\begin{gather}
  \cost(X, \sigma) := 
  \sum_{t=1}^T \Big( \|x_t - x_{t-1}\|_2 + f_t(x_t) \Big). \label{eq:4}
\end{gather}
The \emph{competitive ratio} for this player is
$\max_{\sigma} \frac{\cost(ALG(\sigma), \sigma)}{\min_Y \cost(Y,
  \sigma)}$, the worst-case ratio of the cost of sequence of the
player when given request sequence $\sigma$, to the cost of the
optimal (dynamic) player for it (which is allowed to change its
actions but has to also pay for its movement cost). 
The goal is to give an online algorithm that has a small competitive
ratio.

The CFC/SOCO problem is usually studied in the setting where the action space is all of $\R^d$. We consider the generalized setting where the action space is any convex set $K\sse \R^d$. Formally, the set $K$ is fixed before the arrival of $f_1$, and each action $x_t$ must be chosen from $K$.

The CFC/SOCO problem captures many other problems arising
in sequential decision making. For instance, it can be used to model problems
in ``right-sizing'' data centers, charging electric cars, online logistic
regression, speech animation, and control; see, e.g., works by~\cite{LLWA12,
  WHLM14,KG14,GCW17,goel2019beyond} and the references therein. In all these
problems, the action $x_t$ of the player captures the state of the
system (e.g., of a fleet of cars, or of machines in a datacenter), and
there are costs associated both with taking actions at each timestep, and with
changing actions between timesteps. The CFC/SOCO problem models the challenge of trading
off these two costs against each other. 

One special case of CFC/SOCO is the \emph{convex body chasing}
problem, where the convex functions are indicators of convex sets in
$\R^d$. This special case itself captures the continuous versions of
problems in online optimization that face similar tensions between taking near-optimal actions and
minimizing movement: e.g.,~\emph{metrical task systems}
studied by \cite{BRS,BCLL19}, \emph{paging} and \emph{$k$-server}
(see~\cite{BCLLM18,BGMN19} for recent progress), and many others.

Given its broad expressive power, it is unsurprising that the
competitiveness of CFC/SOCO depends on the dimension $d$ of the space.
Indeed, \cite{FL93} showed a lower bound of $\sqrt{d}$ on the
competitive ratio for convex \emph{body} chasing, and hence for
CFC/SOCO as well. However, it was difficult to prove results about
the upper bounds: Friedman and Linial gave a constant-competitive
algorithm for \emph{body} chasing for the case $d=2$, and the
\emph{function} chasing problem was optimally solved for $d=1$ by
\cite{Bansal15}, but the general problem remained open for any higher
dimensions. The logjam was broken in results
by~\cite{Bansal,argue2019nearly} for some special cases, using ideas
from convex optimization. After intense activity since then,
algorithms with competitive ratio $O(\min(d, \sqrt{d \log T}))$ were
given for the general CFC/SOCO problem
by~\cite{AGGT19,Sellke19}. These results qualitatively settle the
question in the worst case---the competitive ratio is polynomial in
$d$---although quantitative questions about the exponent for $d$
remain.

However, this polynomial dependence on the dimension $d$ can be very pessimistic,
especially in cases when the convex functions have more structure. In these
well-behaved settings, we may hope to get better results and thereby escape this
curse of dimensionality. This motivates our work in this paper: we consider two such settings,
and give dimension-independent guarantees for them.

\paragraph{Well-Conditioned Functions.}
The first setting we consider is when the functions $f_t$ are all
\emph{well-conditioned convex} functions. Recall that a convex
function has \emph{condition number} $\kappa$ if it is
$\a$-\emph{strongly-convex} and $\b$-\emph{smooth} for some constants
$\a,\b > 0$ such that $\frac{\b}{\a}= \k$. Moreover, we are given a
convex set $K$, and each point $x_t$ we return must belong to $K$. (We
call this the \textrm{constrained} CFC/SOCO problem; while constraints
can normally be built into the convex functions, it may destroy the
well-conditionedness in our setting, and hence we consider it separately.)

Our first main result is the
following:
\begin{restatable}[Upper Bound: Well-Conditioned Functions]{theorem}{OBDmain}
  \label{thm:main}
  There is an $O(\sqrt{\kappa})$-competitive algorithm for constrained
  CFC/SOCO problem, where the functions have condition number at most $\k$.
\end{restatable}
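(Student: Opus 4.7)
The plan is to analyze a Constrained Online Balanced Descent (\COBD) algorithm. At each time $t$, after seeing $f_t$, the algorithm picks a level $\ell_t \geq 0$ and sets $x_t$ to be the Euclidean projection of $x_{t-1}$ onto the convex set $S_t := \{x \in K : f_t(x) \leq \ell_t\}$. The level $\ell_t$ is chosen to balance the movement cost $\|x_t - x_{t-1}\|$ against the hit cost $\ell_t$: concretely, we take $\ell_t$ so that $\|x_t - x_{t-1}\|$ equals a constant times $\sqrt{\ell_t/\alpha}$, the natural geometric radius of $\alpha$-strongly-convex sublevel sets. If $x_{t-1}$ already satisfies this balance (in particular if $f_t(x_{t-1})$ is small enough), the algorithm does not move.

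Let $x_t^*$ denote OPT's position at time $t$. I would use a potential of the form $\Phi_t := C\sqrt{\kappa} \cdot \|x_t - x_t^*\|$ and aim to prove the per-step inequality
\begin{equation*}
\|x_t - x_{t-1}\| + f_t(x_t) + (\Phi_t - \Phi_{t-1}) \;\leq\; O(\sqrt{\kappa}) \cdot \bigl(\|x_t^* - x_{t-1}^*\| + f_t(x_t^*)\bigr).
\end{equation*}
Telescoping, combined with $\Phi_0 = 0$ and $\Phi_T \geq 0$, gives the claimed competitive ratio. The change in $\Phi$ naturally decomposes into OPT's contribution, bounded by $C\sqrt{\kappa}\|x_t^* - x_{t-1}^*\|$ via the triangle inequality (charged against OPT's movement), and ALG's contribution, which I want to make sufficiently negative via the projection step.

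The core bound on ALG's contribution proceeds by case analysis. \emph{Case (i):} if $x_t^* \in S_t$, first-order optimality of projection onto $S_t \cap K$ yields the Pythagorean contraction $\|x_t - x_t^*\|^2 \leq \|x_{t-1} - x_t^*\|^2 - \|x_t - x_{t-1}\|^2$; combined with the balance $f_t(x_t) \leq \ell_t = O(\alpha \|x_t - x_{t-1}\|^2)$ and an AM-GM inequality between the linear movement and the quadratic potential drop, this makes the left side non-positive up to an $O(\sqrt{\kappa})$ additive slack. \emph{Case (ii):} if $x_t^* \notin S_t$, then $f_t(x_t^*) > \ell_t \geq f_t(x_t)$ directly charges ALG's hit cost against OPT's hit cost. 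The movement is then bounded via strong convexity, which forces $\|x_{t-1} - x_t^*\|$ to be of order $\sqrt{f_t(x_t^*)/\alpha}$ plus potential, while smoothness controls how quickly $\ell_t$ grows with movement, producing the $O(\sqrt{\kappa})$ factor.

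The main obstacle is the interplay between the ambient constraint set $K$ and the sublevel set $S_t$ in the projection step: at boundary points of $K$, the projection onto $S_t \cap K$ need not satisfy the standard Pythagorean inequality for \emph{all} directions, so one must argue that since $x_t^* \in K$, the crucial optimality inequality $\langle x_{t-1} - x_t,\, x_t^* - x_t \rangle \leq 0$ nevertheless holds. A secondary delicate point is choosing the balance constant inside $\ell_t$ and the potential scale $C\sqrt{\kappa}$ so that the hit-cost and movement-cost terms combine to give exactly $O(\sqrt{\kappa})$ rather than a worse power; this is precisely where \emph{both} strong convexity (to lower bound OPT's hit cost) and smoothness (to upper bound ALG's hit cost) enter, and the exponent $1/2$ emerges as the geometric mean of the two curvature parameters.
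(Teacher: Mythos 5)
Your overall architecture matches the paper's: a constrained \OBD-style algorithm (project onto a truncated sublevel set intersected with $K$), the potential $\Theta(\sqrt\kappa)\cdot\|x_t-y_t\|$, charging OPT's movement via the triangle inequality, and a two-case analysis; your observation that $\langle x_{t-1}-x_t,\,x_t^*-x_t\rangle\le 0$ survives the intersection with $K$ because $x_t^*\in K$ is exactly the content of the paper's normal-cone lemma. However, your concrete balance rule is wrong and breaks the algorithm. You set $\ell_t$ so that $\|x_t-x_{t-1}\|=\Theta(\sqrt{\ell_t/\alpha})$, i.e.\ movement $\asymp$ \emph{radius} of the sublevel set, which gives hit cost $\ell_t=\Theta(\alpha\|x_t-x_{t-1}\|^2)$ — quadratic in the movement, with an $\alpha$-dependent conversion factor. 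Take $d=1$, $K=\R$, $x_0=0$, and a single request $f_1(x)=\frac{\alpha}{2}(x-D)^2$ (so $\kappa=1$). The sublevel set at level $\ell$ is $[D-\sqrt{2\ell/\alpha},\,D+\sqrt{2\ell/\alpha}]$, and your balance forces $D=(c+\sqrt2)\sqrt{\ell/\alpha}$, hence hit cost $\ell=\alpha D^2/(c+\sqrt2)^2$, while OPT pays $D$. The ratio $\Omega(\alpha D)$ is unbounded even for perfectly conditioned functions. The correct balance (used by the paper's \COBD) equates movement with \emph{hit cost}: $\|x_t-x_{t-1}\|=f_t(x_t)$, which immediately gives $\cost_t(ALG)\le 2f_t(x_t)$ and is scale-consistent.

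Even granting the corrected balance, your case~(i) does not close. The Pythagorean contraction plus AM-GM yields only $\|x_{t-1}-x_t^*\|-\|x_t-x_t^*\|\ge \|x_t-x_{t-1}\|^2/(2\|x_{t-1}-x_t^*\|)$, which fails to absorb the \emph{linear} movement term when OPT is far away ($\|x_{t-1}-x_t^*\|\gg\sqrt\kappa\,\|x_t-x_{t-1}\|$); and since in case~(i) OPT may satisfy $f_t(x_t^*)$ just barely below $\ell_t$, you cannot fall back on charging OPT's hit cost either. The missing ingredient is a quantitative angle bound: the paper splits on $f_t(y_t)\gtrless\frac12 f_t(x_t)$ rather than on membership in $S_t$, and in the hard case uses strong convexity at $x_t$ (with the slack $f_t(x_t)-f_t(y_t)\ge\frac12 f_t(x_t)$) together with the smoothness bound $\|\grad f_t(x_t)\|\le\sqrt{2\beta_t f_t(x_t)}$ to show that the angle $\varphi_t$ between $x_{t-1}-x_t$ and $y_t-x_t$ satisfies $-\sec\varphi_t\le\sqrt{2\kappa}$; the law of cosines then converts this into $\|x_t-x_{t-1}\|\le\sqrt{2\kappa}\cdot(-\DTPhi)$, a potential drop proportional to the movement rather than to its square over the distance to OPT. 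Without that step the exponent $1/2$ never materializes.
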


Observe that the competitiveness does not depend on $d$, the dimension
of the space. Moreover, the functions can have very different
coefficients of smoothness and strong convexity, as long as their
ratio is bounded by $\k$. In fact, we give two algorithms. Our first
algorithm is a direct generalization of the greedy-like \emph{Move
  Towards Minimizer} algorithm of~\cite{Bansal15}. While it only
achieves a competitiveness of $O(\kappa)$, it is simpler and works for
a more general class of functions (which we called ``well-centered''),
as well as for all $\ell_p$ norms. Our second algorithm is a
constrained version of the \emph{Online Balanced Descent} algorithm of
\cite{CGW18}, and achieves the competitive ratio claimed
in~\Cref{thm:main}. We then show a lower bound in the same ballpark:

\begin{restatable}[Lower Bound: Well-Conditioned Functions]{theorem}{LBDmain}
  \label{thm:main-lbd}
  Any algorithm for 
  chasing convex functions with condition number at most $\k$
  must have competitive ratio at least $\Omega(\kappa^{1/3})$.
\end{restatable}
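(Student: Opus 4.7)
The plan is to exhibit an adaptive adversarial instance of $\kappa$-well-conditioned quadratic functions in low dimension and show that any online strategy must pay $\Omega(\kappa^{1/3})$ times the offline optimum. I would work in $\R^2$ with anisotropic quadratics
\[
  f_t(x,y) \;=\; \tfrac{1}{2}(x-a_t)^2 \;+\; \tfrac{\kappa}{2}(y-b_t)^2,
\]
each of which is $1$-strongly-convex and $\kappa$-smooth, and therefore has condition number exactly $\kappa$. The whole construction is driven by the asymmetry between the soft ($x$) and stiff ($y$) directions: if an algorithm incurs only $H$ in hit cost at step $t$ then its $y$-coordinate must lie within $O(\sqrt{H/\kappa})$ of $b_t$, whereas its $x$-coordinate only needs to lie within $O(\sqrt H)$ of $a_t$. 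So small changes in $b_t$ force tight tracking, hence movement.

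The adversary picks the minimizers $(a_t,b_t)$ so that they all cluster in a small ball of carefully chosen radius $r$ around an \emph{a priori} unknown anchor $(a^\star,b^\star)$, and reveals the $b_t$'s, either adaptively from the algorithm's responses or, following Yao's principle, from an appropriate distribution over anchors. By construction, the offline optimum can simply move once to $(a^\star,b^\star)$ and remain there: across a phase of $T$ steps its cost is at most $r + O(\kappa r^2 T)$. The online player, not knowing $(a^\star,b^\star)$, is forced at each step either to track $b_t$ to within $O(1/\sqrt\kappa)$ (incurring movement) or else to pay hit cost that already dominates the OPT budget. Running $L$ such phases with geometrically shrinking scales (in the spirit of the nested-ball/halving constructions used for the classical $\sqrt{d}$ body-chasing lower bound of Friedman--Linial, but softened so that halfspace indicators are replaced by the quadratic penalty of smoothness $\kappa$) then amplifies the algorithm's disadvantage across phases.

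Three parameters---the per-phase scale $r$, the number of phases $L$, and the phase length $T$---must be tuned simultaneously so that (i) OPT's aggregate cost is $O(1)$, (ii) the anchor is genuinely hidden from any deterministic online player (handled by the Yao randomization), and (iii) the algorithm's forced stiff-direction movement sums to $\Omega(\kappa^{1/3})$; back-of-envelope balancing already suggests $r\approx\kappa^{-4/3}$ and $L\approx \kappa^{1/3}$ as the right operating point. The main obstacle I anticipate is step (iii): one has to rule out the natural online strategy of keeping $y_t$ pinned at the running estimate of $b^\star$ and simply absorbing the hit cost. Showing that any such compromise strategy still loses by $\Omega(\kappa^{1/3})$ will likely require a potential-function/charging argument on each phase that tracks both the algorithm's displacement in the stiff direction and its accumulated hit cost, together with an averaging argument over the randomized anchor to preclude ``lucky guessing.''
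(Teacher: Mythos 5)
There is a fatal gap: your entire construction lives in $\R^2$, but convex function chasing in fixed dimension $d$ admits an $O(d)$-competitive algorithm (this is the general CFC/SOCO upper bound of Argue et al.\ and Sellke, cited in the introduction). So every instance you can build in the plane---including your anisotropic quadratics, your nested phases, and any Yao-style distribution over anchors---is chased within a constant factor of optimum by an algorithm that knows nothing about $\kappa$. No tuning of $r$, $L$, and $T$ can produce an $\Omega(\kappa^{1/3})$ lower bound in two dimensions; the difficulty you flag in step (iii) (ruling out the strategy that pins $y_t$ at a running estimate of $b^\star$) is not a technical obstacle but a symptom of this impossibility. The stiff/soft asymmetry you identify is real, but it is exactly the mechanism behind the paper's $\Omega(\kappa)$ and $\Omega(\sqrt{\kappa})$ lower bounds \emph{against the specific algorithms} \MtoM and \COBD (which do live in $\R^2$); it cannot by itself defeat an arbitrary algorithm.

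The paper's proof instead makes the dimension grow with $\kappa$: it sets $d = \kappa^{2/3}$ and uses a smoothed Friedman--Linial hypercube instance. A random sign vector $(\e_1,\dots,\e_d)\in\{-1,1\}^d$ is revealed one coordinate per step via $f_t(x) = \lambda\sum_{i\le t}(x_i-\gamma\e_i)^2 + \mu\sum_{i>t}x_i^2$, which is $2\mu$-strongly-convex and $2\lambda$-smooth. The offline player moves once along the diagonal to $\gamma\pmb{\e}$, paying $\gamma\sqrt{d}$ in movement plus $O(\mu d^2\gamma^2)$ in hit cost from the residual $\mu$-terms, while the online player, ignorant of $\e_t$ until time $t$, pays $\Omega(\gamma - 1/(4\lambda))$ in expectation at each of the $d$ steps---the classical $\ell_1$-versus-$\ell_2$ gap of $\sqrt{d}$. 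Choosing $\gamma=\lambda=1$ and $\mu = \kappa^{-1} = d^{-3/2}$ balances OPT's hit cost against its movement and yields $\Omega(\sqrt{d}) = \Omega(\kappa^{1/3})$. You name this construction as inspiration but then collapse it into the plane, which discards the one ingredient---high dimension---that makes it work. To repair your argument you would need to redo it in dimension growing polynomially with $\kappa$, at which point it essentially becomes the paper's proof.
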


It remains an intriguing question to close the gap
between the upper bound of $O(\sqrt{\k})$ from \Cref{thm:main} and the
lower bound of $\Omega(\kappa^{1/3})$ from \Cref{thm:main-lbd}.
Since we show that $O(\kappa)$ and $O(\sqrt{\kappa})$ are
respectively tight bounds on the competitiveness of the two algorithms
mentioned above, closing the gap will require changing the algorithm.

\paragraph{Chasing Low-Dimensional Functions.} The second case is when
the functions are supported on low-dimensional subspaces of $\R^d$.
One such special case is when the functions are indicators of
$k$-dimensional affine
subspaces; this problem is referred to as chasing subspaces. 
If $k=0$ we are chasing
points, and the problem becomes trivial. \cite{FL93} gave a
constant-competitive algorithm for the first non-trivial case, that of
$k=1$ or \emph{line chasing}. \cite{Anto} simplified and improved this
result, and also gave an $2^{O(d)}$-competitive algorithm for chasing
general affine subspaces. Currently, the best bound even for
$2$-dimensional affine subspaces---i.e., planes---is $O(d)$, using the
results for general CFC/SOCO.

\begin{restatable}[Upper Bound: Low-Dimensional Chasing]{theorem}{SCCmain}
  \label{thm:main-sub}
  There is an $O(\min(k, \sqrt{k \log T}))$-compe\-titive algorithm for
  chasing convex functions supported on affine subspaces of dimension
  at most $k$.
\end{restatable}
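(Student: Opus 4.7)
The bound $O(\min(k,\sqrt{k\log T}))$ strongly suggests giving two separate algorithms and taking the better: one achieving $O(k)$ and one achieving $O(\sqrt{k\log T})$. Both will be adaptations of the $O(\min(d,\sqrt{d\log T}))$-competitive algorithms for general CFC/SOCO from \cite{AGGT19,Sellke19}. The key observation is that when each $f_t$ is supported on a $k$-dimensional affine subspace $A_t \sse \R^d$, the work function
\[
  w_t(x) \;=\; \min\Big\{\textstyle\sum_{s \le t}\big(f_s(y_s)+\|y_s-y_{s-1}\|\big) \; : \; y_0 = x_0,\, y_t = x\Big\}
\]
is infinite outside $A_t$, so any ``near-optimal'' sublevel set $K_t = \{x : w_t(x) \le (1+\eta)\OPT_t\}$ used by either algorithm is a convex body sitting inside the $k$-dimensional affine subspace $A_t$. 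Thus the effective dimension of the geometric objects chased is $k$, not $d$.

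For the $O(k)$ bound I would run the nested-sublevel-set algorithm of \cite{AGGT19} verbatim and retrace its potential-function analysis, verifying that every appearance of the ambient dimension arises as the dimension of the convex body $K_t$; since $K_t \sse A_t$ lies in a $k$-dim affine subspace, all these occurrences can be replaced by $k$. For the $O(\sqrt{k\log T})$ bound I would instead use Sellke's functional Steiner-point algorithm, moving at time $t$ to the Steiner point $s(K_t)$ computed intrinsically inside the $k$-dim affine hull $A_t$. The Steiner point in a $k$-dim space obeys the Lipschitz bound $\|s(K)-s(K')\|_2 \le c\sqrt{k}\,\cdot d_H(K,K')$, which is what drives the $\sqrt{k}$; the extra $\sqrt{\log T}$ factor appears through the standard discretization of the sublevel parameter $\eta$ over $O(\log T)$ geometric scales, exactly as in Sellke's analysis.

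The main obstacle is that $s(K_{t-1})\in A_{t-1}$ and $s(K_t)\in A_t$ naturally live in two different $k$-dim affine subspaces, which may be in completely general position in $\R^d$, so the Lipschitz estimate does not transfer across such a ``reference-frame change'' for free. I plan to resolve this by lifting both bodies into the joint affine hull $\aff(A_{t-1}\cup A_t)$, which has dimension at most $2k$, and computing Steiner points there; this costs only a factor of $\sqrt{2}$ in the Lipschitz constant while yielding a meaningful comparison in a common ambient subspace. Since $w_t = w_{t-1} + f_t$ after a movement step, the bodies $K_{t-1}$ and $K_t$ remain Hausdorff-close inside this lifted $2k$-dim space, and the per-step movement bound $\|x_t-x_{t-1}\|_2 \le O(\sqrt{k})\,d_H(K_{t-1},K_t)$ telescopes against the growth of $\OPT_t$ to give the claimed competitive ratio; taking the better of the two algorithms produces the bound $O(\min(k,\sqrt{k\log T}))$.
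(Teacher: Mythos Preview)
Your route is genuinely different from the paper's, and while the Steiner-point half of your plan is plausible, the paper's argument is both cleaner and avoids the re-analysis you are proposing.

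The paper does not open up the analyses of \cite{AGGT19,Sellke19} at all. Instead it gives a black-box \emph{embedding}: fix once and for all a $(2k+1)$-dimensional linear subspace $L\subseteq\R^d$, and build online a sequence of affine isometries $R_1,R_2,\ldots$ with (i) $R_t(K_t)\subseteq L$ and (ii) $\|R_t(x_t)-R_{t-1}(x_{t-1})\|=\|x_t-x_{t-1}\|$ for every $x_{t-1}\in K_{t-1}$, $x_t\in K_t$. Each $R_t$ is obtained from $R_{t-1}$ by composing with an isometry that fixes $\Span(R_{t-1}(K_{t-1}))$ and rotates $\Span(R_{t-1}(K_t))$ into $L$; such an isometry exists precisely because $\dim\Span(R_{t-1}(K_{t-1})\cup R_{t-1}(K_t))\le 2k+1=\dim L$. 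This yields a cost-preserving bijection between feasible sequences for $\{K_t\}$ in $\R^d$ and for $\{R_t(K_t)\}$ in $L$, so one simply runs any $g(d)$-competitive CFC algorithm inside $L$ and pulls the output back by $R_t^{-1}$, obtaining $g(2k+1)$-competitiveness. No property of the underlying algorithm is used beyond its competitive ratio.

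Your plan, by contrast, keeps the instance in $\R^d$ and argues that the dimension appearing in each existing analysis is really the dimension of the bodies. For the Steiner-point piece this intuition is sound---consecutive work-function sublevel sets both sit in $\aff(A_{t-1}\cup A_t)$, which has dimension at most $2k+1$ (not $2k$, as you wrote), and Steiner points are intrinsic---but you would still have to redo the telescoping of Hausdorff distances against $\OPT$ across a sequence of changing reference subspaces, and check that none of Sellke's lemmas implicitly use global $1$-Lipschitzness of $w_t$ on all of $\R^d$ (which fails once $w_t\equiv+\infty$ off $A_t$). For the $O(k)$ half, ``retrace the AGGT19 potential analysis and verify every $d$ is really a $\dim K_t$'' is a hope, not a proof. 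The paper's isometry trick sidesteps all of this, and has the added benefit that any future $g(d)$-competitive CFC algorithm immediately yields a $g(2k+1)$-competitive low-dimensional algorithm without further work.
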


The idea behind \Cref{thm:main-sub} is to perform a certain kind of
dimension reduction: we show that any instance of chasing
$k$-dimensional functions can be embedded into an $(2k+1)$-dimensional
instance, without changing the optimal solutions. Moreover, this
embedding can be done online, and hence can be used to extend any
$g(d)$-competitive algorithm for CFC/SOCO into a $g(2k+1)$-competitive
algorithm for $k$-dimensional functions.

\subsection{Related Work}
\label{sec:related-work}

There has been prior work on dimension-independent bounds for other
classes of convex functions. 
The Online Balanced Descent (\OBD) algorithm of \cite{CGW18} is
$\a$-competitive on Euclidean metrics if each function $f_t$ is $\a$-{locally-polyhedral} (i.e.,
it grows at least linearly as we go away from the minimizer). Subsequent works of \cite{GW19, goel2019beyond}
consider \emph{squared Euclidean distances} and give algorithms with
dimension-independent competitiveness of $\min(3+O(1/\alpha),
O(\sqrt{\alpha}))$  for $\alpha$-strongly
convex functions.
The requirement of squared Euclidean distances in these latter works
is crucial for their results: we show in \Cref{fct:lbd} that no online algorithm can
have dimension-independent competitiveness for \emph{non-squared}
Euclidean distances if the functions are only $\a$-strongly convex (or
only $\b$-smooth). Observe that 
our algorithms do not depend on the
actual value of the strong convexity coefficient $\a$, only on the ratio
between it and the smoothness coefficient $\beta$---so the functions
$f_t$ may have very different $\alpha_t, \beta_t$ values, and these
$\alpha_t$ may even be arbitrarily close to zero.

A related problem is the notion of regret minimization, which considers the additive gap of the
algorithm's cost~(\ref{eq:4}) with respect to the best \emph{static}
action $x^*$ instead of the multiplicative gap with respect to the
best \emph{dynamic} sequence of actions.
The notions of competitive ratio and regret are known to be 
inherently in conflict: \cite{ABSK13} showed that algorithms minimizing regret must
have poor competitive ratios in the worst-case. Despite this negative result, many 
ideas do flow from one setting to the other. 
These is a vast body of work where the algorithm is allowed to move
for free: see, e.g., books by~\cite{Seb,Hazan,SSS} for many
algorithmic ideas. This includes bounds comparing to the static
optimum, and also to a dynamic optimum with a bounded movement cost~\cite{Zinkevich03,BesbesGZ15,MSJR16,BubeckLLW19}.

Motivated by convergence and generalization bounds for learning
algorithms, the path length of gradient methods have been studied
by~\cite{OymakS19,GBR}. Results for CFC/SOCO also imply path-length
bounds by giving the same function repeatedly: the difference is that
these papers focus on a specific algorithm (e.g., gradient flow/descent),
whereas we design problem-specific algorithms (\MtoM or \COBD). 

The CFC/SOCO problem has been considered in the case with
\emph{predictions} or \emph{lookahead}: e.g., when the next $w$ functions are
available to the algorithm. For example,~\cite{LLWA12,LQL18} explore
the value of predictions in the context of data-server management, and
provide constant-competitive algorithms. For more recent work see,
e.g.,~\cite{LinGW19} and the references therein.

\subsection{Definitions and Notation}
\label{sec:definitions-notation}

We consider settings where the convex functions $f_t$ are
non-negative and differentiable. 
Given constants $\a,\b >0$, 
a differentiable function $f:\R^d\to \R$ is $\a$-\emph{strongly-convex} with respect to the norm $\|\cdot\|$ if 
for all $x,y\in \R^d$,
\[f(y) - f(x) - \la \grad f(x), y-x \ra \ge \frac{\a}{2}\|x-y\|^2,\]
and $\b$-\emph{smooth} if for all $x,y\in \R^d$,
\[f(y) - f(x) - \la \grad f(x), y-x \ra 
	\le \frac{\b}{2}\|x-y\|^2.  \]
A function $f$ is $\kappa$-\emph{well-conditioned} if there is a constant $\a>0$
for which $f$ is both $\a$-strongly-convex and $\alpha \kappa$-smooth.
Of course, we focus on the Euclidean $\ell_2$ norm (except briefly in \S\ref{sec:general-norm}), and hence $\|\cdot\|$ denotes $\|\cdot\|_2$ unless otherwise specified.

In the following, we assume that all our functions $f$ satisfy the
\emph{zero-minimum} property: i.e., that $\min_y f(y) = 0$. Else we can
consider the function $g(x) = f(x) - \min_y f(y)$ instead: this is
also non-negative valued, with the same smoothness and strong
convexity as $f$. Moreover, the competitive ratio can only increase
when we go from $f$ to $g$,
since the hit costs of both the algorithm and the optimum decrease by
the same additive amount.

\paragraph{Cost.}  Consider a sequence $\sigma = f_1, \ldots,
f_T$ of functions. If the algorithm moves from $x_{t-1}$ to $x_t$
upon seeing function $f_t$, the \emph{hit cost} is $f_t(x_t)$, and the
\emph{movement cost} is $\| x_t - x_{t-1}\|$.  Given a sequence
$X = (x_1,\dots, x_T)$ and a time $t$, define
$\cost_t(X,\sigma) := \|x_t - x_{t-1}\| + f_t(x_t)$ to be the total cost
(i.e., the sum of the hit and movement costs) incurred at time
$t$. When the algorithm and request sequence $\sigma$ are
clear from context, let $X_{ALG} = (x_1,x_2, \dots, x_T)$ denote the
sequence of points that the algorithm plays on $\sigma$.  Moreover, denote the
offline optimal sequence of points by
$Y_{OPT} = (y_1,y_2,\dots, y_T)$. For brevity, we omit mention of
$\sigma$ and let
$\cost_t(ALG) := \cost_t(X_{ALG}, \sigma)$ and
$\cost_t(OPT) := \cost_t(Y_{OPT},\sigma)$.

\paragraph{Potentials and Amortized Analysis.}
Given a potential $\Phi_t$ associated with time $t$, denote
$\Delta_t \Phi := \Phi_t - \Phi_{t-1}$. Hence, for all the amortized
analysis proofs in this paper, the goal is to show
\[ \cost_t(ALG) + a\cdot \Delta_t \Phi \leq b\cdot \cost_t(OPT) \]
for suitable parameters $a$ and $b$. Indeed, summing this over all
times gives
\[ \text{(total cost of }ALG) + a(\Phi_T - \Phi_0) \leq b \cdot
  \text{(total cost of }OPT). \]
Now if $\Phi_0 \leq \Phi_T$, which is the case for all our potentials,
we get that the cost of the algorithm is at most $b$ times the optimal
cost, and hence the algorithm is $b$-competitive.

\paragraph{Deterministic versus Randomized Algorithms.}
We only consider deterministic algorithms. This is without loss of
generality by the observation in \cite[Theorem~2.1]{Bansal15}:
given a randomized algorithm which plays the random point $X_t$ at
each time $t$, instead consider deterministically playing the ``average'' point $\mu_t := \E[X_t]$. 
This does not increase either the movement or the hit cost, 
due to Jensen's inequality and the convexity of the functions $f_t$ and the
norm $\|\cdot \|$.

\section{Algorithms}

We now give two algorithms for convex function chasing:
\S\ref{sec:move-to-min} contains the simpler \emph{Move Towards
  Minimizer} algorithm that achieves an $O(\k)$-competitiveness for 
  $\k$-well-conditioned functions, and a more general class of
 \emph{well-centered functions} (defined in~\Cref{sec:well-centered}). 
 Then \S\ref{sec:constrained-obd} contains the
\emph{Constrained Online Balanced Descent} algorithm that achieves the
$O(\sqrt{\k})$-competitiveness claimed in \Cref{thm:main}.

\subsection{Move Towards Minimizer: \texorpdfstring{$O(\kappa)$}{O(kappa)}-Competitiveness}
\label{sec:move-to-min}

The \emph{Move Towards Minimizer} (\MtoM) algorithm was defined in~\cite{Bansal15}.

\begin{quote}
  \textbf{The M2M Algorithm.} Suppose we are at position $x_{t-1}$ and receive the
  function $f_t$. Let $x^*_t := \arg\min_x f_t(x)$ denote the
  minimizer of $f_t$. Consider the line segment with endpoints
  $x_{t-1}$ and $x^*_t$, and let $x_t$ be the unique point on this
  segment with $\|x_t - x_{t-1}\| = f_t(x_t)-f_t(x_t^*)$.\footnote{Such a 
  point is always unique when $f_t$ is strictly convex.} 
  The point $x_t$ is the one
  played by the algorithm.
\end{quote}

The intuition behind this algorithm is that one of two things happens:
either the optimal algorithm $OPT$ is at a point $y_t$ near $x_t^*$,
in which case we make progress by getting closer to $OPT$. Otherwise,
the optimal algorithm is far away from $x_t^*$, in which case the hit
cost of $OPT$ is large relative to the hit cost of $ALG$.

\begin{figure}
\begin{center}
\includegraphics[scale=1]{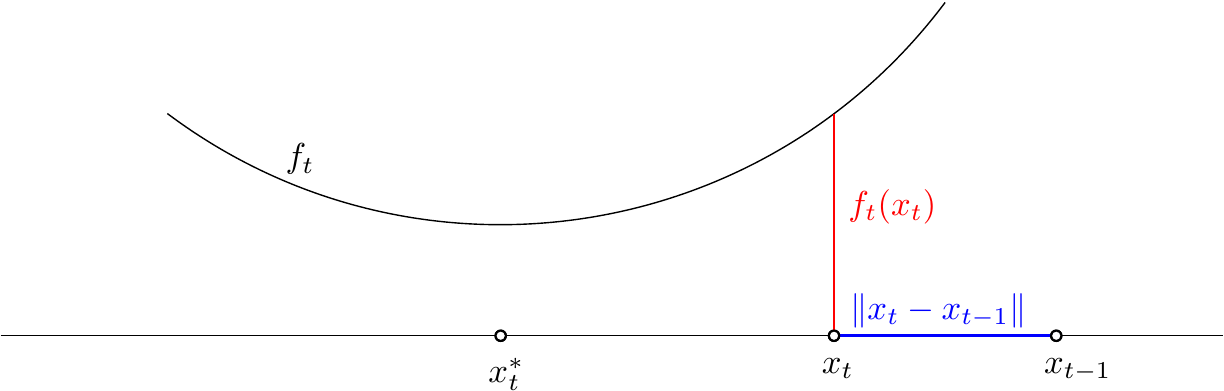}
\end{center}
\caption{The \MtoM Algorithm in dimension $d=1$.}
\label{fig:move-to-opt}
\end{figure}

As noted in \S\ref{sec:definitions-notation}, we assume that $f_t(x_t^*)=0$, 
hence \MtoM plays a point $x_t$ such that $\|x_t - x_{t-1}\| = f_t(x_t)$.
Observe that the total cost incurred by the algorithm at time $t$ is 
\[ \cost_t(ALG) = f_t(x_t) + \| x_t - x_{t-1}\| = 2f_t(x_t) =
  2\|x_t - x_{t-1}\|. \]

\subsubsection{The Analysis}

The proof of competitiveness for \MtoM
is via a potential function argument. The potential function captures the distance between the
algorithm's point $x_t$ and the optimal point $y_t$. Specifically, fix
an optimal solution playing the sequence of points
$Y_{OPT}=(y_1, \dots, y_T)$, and define
\[ \Phi_t:= \|x_t - y_t\|. \] Observe that $\Phi_0 = 0$ and
$\Phi_t \geq 0$. 

\begin{theorem}
  \label{thm:k-competitive}
  With $c := 4+4\sqrt 2$, for each $t$,
  \begin{gather}
    \cost_t(ALG) + 2\sqrt 2 \cdot \Delta_t \Phi \le c\cdot \k \cdot
    \cost_t(OPT). \label{eq:1}
  \end{gather}
  Hence, the \MtoM algorithm is $c \k$-competitive.
\end{theorem}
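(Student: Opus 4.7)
The plan is to establish the per-step inequality \eqref{eq:1} by an amortized argument, casing on the ratio $f_t(x_t)/f_t(y_t)$. Since M2M satisfies $\cost_t(ALG) = 2 d_t$ with $d_t := \|x_t - x_{t-1}\| = f_t(x_t)$, and since the triangle inequality applied to the move $y_{t-1} \to y_t$ gives
\[
\Delta_t\Phi \;\le\; \|y_t - y_{t-1}\| + \delta_t, \qquad \delta_t := \|x_t - y_t\| - \|x_{t-1} - y_t\|,
\]
the $2\sqrt{2}\,\|y_t - y_{t-1}\|$ contribution can be immediately absorbed into the movement cost inside $c\kappa\,\cost_t(OPT)$ (since $2\sqrt{2} \le c\kappa$). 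So it suffices to show $2 d_t + 2\sqrt{2}\,\delta_t \le c\kappa\, f_t(y_t)$ at each step.

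In the \emph{easy case} $f_t(x_t) \le 2\kappa f_t(y_t)$, the triangle inequality already gives $\delta_t \le d_t$, and hence
\[
2 d_t + 2\sqrt{2}\, \delta_t \;\le\; (2 + 2\sqrt{2})\, d_t \;\le\; (2 + 2\sqrt{2}) \cdot 2\kappa\, f_t(y_t) \;=\; c\kappa\, f_t(y_t),
\]
matching the choice $c = 4 + 4\sqrt{2}$.

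In the \emph{hard case} $f_t(x_t) > 2\kappa f_t(y_t)$, write $r := \|x_t - x_t^*\|$ and $R := \|y_t - x_t^*\|$. Combining $\alpha$-strong convexity (which yields $f_t(y_t) \ge \tfrac{\alpha}{2} R^2$) and $\beta$-smoothness (which yields $f_t(x_t) \le \tfrac{\beta}{2} r^2$) with the case hypothesis translates the analytic bound into the purely geometric one, $R \le r/\sqrt{2}$. I then want to conclude $\delta_t \le -d_t/\sqrt{2}$, which makes the entire left-hand side nonpositive, so the target inequality holds trivially.

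The geometric step is the main obstacle. Concretely, I would prove the lemma: \emph{whenever $R \le r$, one has $\|x_{t-1} - y_t\| - \|x_t - y_t\| \ge d_t\sqrt{1 - R^2/r^2}$}, which specializes at $R \le r/\sqrt{2}$ to the bound $|\delta_t| \ge d_t/\sqrt{2}$ that I need. By rotational symmetry I may restrict to the two-dimensional plane spanned by $x_{t-1}, x_t^*, y_t$ and parameterize $y_t = x_t^* + R(\cos\phi,\sin\phi)$ where the first axis points along $u := (x_t^* - x_{t-1})/\|x_t^* - x_{t-1}\|$. A direct expansion using the coordinates $x_{t-1}=0$, $x_t = d_t u$, $x_t^* = (r+d_t) u$ yields the identity $\|x_{t-1}-y_t\|^2 - \|x_t-y_t\|^2 = d_t(2r + d_t + 2R\cos\phi)$, hence
\[
|\delta_t| \;=\; \frac{d_t\,(2r + d_t + 2R\cos\phi)}{\|x_{t-1}-y_t\| + \|x_t-y_t\|}.
\]
Setting a one-variable derivative (in $\cos\phi$) to zero locates the worst angle at $\cos\phi = -R(2r+d_t)/(2r(r+d_t))$, where $|\delta_t|$ evaluates to $d_t\sqrt{1 - R^2/(r(r+d_t))} \ge d_t\sqrt{1 - R^2/r^2}$, completing the lemma. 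Finally, summing \eqref{eq:1} over $t$ and using $\Phi_0 = 0 \le \Phi_T$ yields $\cost(ALG) \le c\kappa\, \cost(OPT)$.
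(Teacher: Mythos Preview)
Your proof is correct and follows the same high-level amortization scheme as the paper: use $\Phi_t=\|x_t-y_t\|$, strip off OPT's movement via the triangle inequality, and then handle two cases, one where OPT's hit cost is already comparable to ALG's and one where the potential drops enough to pay for ALG. The difference is in how the cases are organized and in the geometric lemma.

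The paper first proves a purely geometric \emph{Structure Lemma}: for any $x,y$ and $\gamma\in[0,1]$, either $\|y-\gamma x\|-\|y-x\|\le -\tfrac{1}{\sqrt2}\|x-\gamma x\|$ or $\|y\|\ge\tfrac{1}{\sqrt2}\|\gamma x\|$. It then invokes well-conditioning only in the second case to get $f_t(x_t)\le 2\kappa f_t(y_t)$. You instead case directly on whether $f_t(x_t)\le 2\kappa f_t(y_t)$; in your hard case you use well-conditioning to deduce the geometric hypothesis $R\le r/\sqrt2$, and then prove a sharper quantitative lemma, namely $-\delta_t\ge d_t\sqrt{1-R^2/r^2}$ for all $R\le r$, via a one-variable calculus optimization over the angle of $y_t$. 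Your lemma specializes at $R=r/\sqrt2$ to exactly the paper's bound $-\delta_t\ge d_t/\sqrt2$, so the two arguments meet at the same inequality; your version is more explicit and gives a continuous dependence on $R/r$ (which you don't end up needing), while the paper's law-of-cosines argument is shorter and avoids the derivative computation. Either route yields the same constant $c=4+4\sqrt2$.
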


The main technical work is in the following lemma, which will be used
to establish the two cases in the analysis. Referring to
Figure~\ref{fig:structure-lemma}, imagine the minimizer for
$f_t$ as being at the origin, the point $y$ as being the location of
$OPT$, and the points $x$ and $\gamma x$ as being the old and new
position of $ALG$. Intuitively, this lemma says that either $ALG$'s
motion in the direction of the origin significantly reduces the
potential, or $OPT$ is far from the origin and hence has significant
hit cost.

\begin{lemma}[Structure Lemma]
  \label{lem:structure}
  Given any scalar $\g\in [0,1]$ and any two vectors $x,y\in \R^d$, at least one of the 
  following holds:
  \begin{enumerate}
  \item[(i)] $ \|y-\g x\| - \|y-x\| \le -\tfrac1{\sqrt 2} \|x-\g x\|$.
  \item[(ii)] $\|y\| \ge \tfrac{1}{\sqrt 2} \|\g x\|$.
  \end{enumerate}
\end{lemma}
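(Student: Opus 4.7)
The plan is to prove the contrapositive: assume (ii) fails, so $\|y\|<\tfrac{1}{\sqrt 2}\|\gamma x\|=\tfrac{\gamma\|x\|}{\sqrt 2}$, and derive (i). The case $\gamma=1$ is trivial since $\|x-\gamma x\|=0$, so I assume $\gamma\in[0,1)$; similarly $x=0$ is trivial. Since every quantity depends only on $\|x\|$, $\|y\|$, and $\langle x,y\rangle$, I would rotate into the plane spanned by $\{x,y\}$ and pick coordinates $x=(s,0)$ with $s=\|x\|>0$ and $y=(a,h)$, so $\|y\|^2=a^2+h^2$. The hypothesis then reads $a^2+h^2<\gamma^2 s^2/2$, and (i) becomes
\[
\sqrt{(a-s)^2+h^2}\;\ge\;\sqrt{(a-\gamma s)^2+h^2}\;+\;\tfrac{(1-\gamma)s}{\sqrt 2}.
\]

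Both sides are nonnegative, so I would square. Using the identity $(a-s)^2-(a-\gamma s)^2=(1-\gamma)s\bigl((1+\gamma)s-2a\bigr)$ and then dividing by the positive factor $(1-\gamma)s$, the squared inequality collapses to
\[
\tfrac{(1+3\gamma)s}{2}-2a\;\ge\;\sqrt 2\,\sqrt{(a-\gamma s)^2+h^2}.
\]
A quick check shows the LHS is nonneg under the hypothesis: $|a|\le\|y\|<\gamma s/\sqrt 2$, and an elementary bound gives $\gamma/\sqrt 2<(1+3\gamma)/4$ for all $\gamma\in[0,1]$. So I may square a second time without introducing spurious solutions.

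Expanding and simplifying, the target becomes
\[
\tfrac{(1+6\gamma+\gamma^2)s^2}{4}-2(1+\gamma)sa+2a^2-2h^2\;\ge\;0.
\]
Applying the hypothesis in the form $-2h^2\ge -\gamma^2 s^2+2a^2$, it suffices to prove the strengthened quadratic inequality
\[
4a^2-2(1+\gamma)sa+\tfrac{(1+6\gamma-3\gamma^2)s^2}{4}\;\ge\;0,
\]
whose discriminant computes to $16s^2\gamma(\gamma-1)$, which is nonpositive precisely because $\gamma\in[0,1]$. Since the leading coefficient is positive, the quadratic is globally nonneg, and (i) follows.

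The main obstacle is that the constant $1/\sqrt 2$ is sharp enough that the naive triangle-inequality estimate on $\|y-x\|+\|y-\gamma x\|$ in $\|y-x\|-\|y-\gamma x\|=\tfrac{\|y-x\|^2-\|y-\gamma x\|^2}{\|y-x\|+\|y-\gamma x\|}$ is too weak. A purely geometric argument only handles a regime $\gamma\lesssim(\sqrt 2-1)/3$, so one is forced into coordinates; the payoff is the clean observation that the final quadratic's discriminant is nonpositive exactly on the allowed range $\gamma\in[0,1]$.
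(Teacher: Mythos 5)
Your proof is correct, and it takes a genuinely different route from the paper's. The paper argues geometrically: it introduces the angle $\th$ between $x$ and $y-\g x$, assumes (i) fails, applies the law of cosines to deduce $\cos\th > -\frac{1}{\sqrt2}$ and hence $\sin\th > \frac1{\sqrt2}$, and then reads (ii) off the elementary bound $\|y\|\ge \sin(\th)\|\g x\|$ (the distance from the origin to the line through $\g x$ and $y$). You instead take the other contrapositive (not-(ii) $\Rightarrow$ (i)), reduce to coordinates in the plane $\mathrm{span}\{x,y\}$, and verify (i) by two squarings followed by a discriminant computation; the key steps check out: the sign condition $\frac{(1+3\g)s}{2}-2a\ge 0$ that legitimizes the second squaring does follow from $|a|\le\|y\|<\g s/\sqrt2$ and $2\sqrt2\g<1+3\g$, the substitution $-2h^2\ge 2a^2-\g^2s^2$ is a valid consequence of the hypothesis, and the discriminant of the resulting quadratic in $a$ is indeed $16s^2\g(\g-1)\le 0$ with positive leading coefficient. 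The paper's argument is shorter and makes the geometric content visible (it is the same picture that drives the analysis of \MtoM), while yours is entirely mechanical and self-verifying, and has the minor virtue of exhibiting exactly where the constant $\frac1{\sqrt2}$ enters: it is what makes the final discriminant factor as $\g(\g-1)$ over the whole range $\g\in[0,1]$ rather than only part of it. Your handling of the degenerate cases ($x=0$, $\g=1$, and the vacuous case $\g=0$) is also fine.
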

\begin{proof}
  Let $\th$ be the angle between $x$ and $y-\g x$ 
  as in \Cref{fig:structure-lemma}. 
  If $\th < \frac{\pi}{2}$, then $\|y\| \ge \|\gamma x\|$, 
  and hence condition (ii) is satisfied.
  So let $\theta \in [\frac{\pi}{2}, \pi]$; using Figure~\ref{fig:structure-lemma} observe that
  \begin{equation}\label{eq:structure-lemma-1} 
  \|y\| \ge \sin(\theta)\cdot\|\gamma x\|.
  \end{equation}

\begin{figure}
\begin{center}
\includegraphics[scale=1]{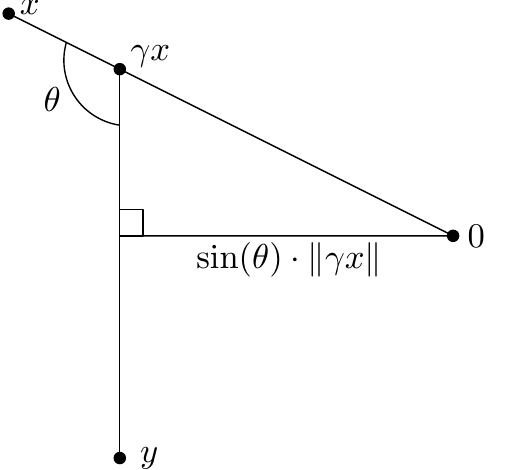}
\end{center}
\caption{The Proof of Lemma~\ref{lem:structure}.} 
\label{fig:structure-lemma}
\end{figure}

  Suppose condition (i) does not hold. Then
  \[ \|y-x\|  <  \tfrac1{\sqrt 2} \|(1-\g) x\| + \|y-\g x\|. \]
  Since both sides are non-negative, we can square to get
  \begin{align*}
    \|y-x\|^2  
    &< \frac12(1-\g)^2 \|x\|^2 + \sqrt{2}\cdot\|(1-\g)x\| \cdot \|y-\g x\|
    	+ \|y-\g x\|^2\\
    \implies \|y-x\|^2 - \|y-\g x\|^2  
    &< \frac12(1-\g)^2 \|x\|^2 + \sqrt2 (1-\g)\cdot \|x\|\cdot \|y-\g x\|.
  \end{align*}
  The law of cosines gives 
  \[\|y-x\|^2 - \|y-\g x\|^2  
  	= (1-\g)^2\|x\|^2 - 2(1-\g) \cos(\th)\cdot \|x\|\cdot  \|y-\g x\|.\]
  Substituting and simplifying,
  \[ \frac12 (1-\g)\| x\| < (\sqrt 2 +2\cos(\th)) \|y-\g x\|.\]
  As the LHS is non-negative, $\cos(\th) > -\frac1{\sqrt 2}$. 
  Since $\theta \ge \frac{\pi}{2}$, it follows that 
  $\sin(\th) > \frac{1}{\sqrt{2}}$.
  Now, (\ref{eq:structure-lemma-1}) implies 
  that $\|y\| \ge \sin(\theta)\cdot \| \gamma x\| \ge \frac{1}{\sqrt2}\|\g x\|$. 
\end{proof}

\begin{proof}[Proof of \cref{thm:k-competitive}]
  First, the change in potential can be bounded as
  \[ \Delta_t \Phi = \| x_t - y_t \| - \| x_{t-1} - y_{t-1} \| \leq \| x_t - y_t \| - \Big( \| x_{t-1} - y_t \| - \| y_t - y_{t-1} \| 
    \Big). \]
  The resulting term $\| y_t - y_{t-1} \|$ can be charged to the
  movement cost of $OPT$, and hence it suffices to show that
  \begin{gather}
    \cost_t(ALG) + 2\sqrt 2\cdot \DTPhi \le (4+4\sqrt 2)\k\cdot
    f_t(y_t), \label{eq:2}
  \end{gather}
  where $\DTPhi:= \| x_t - y_t \| - \| x_{t-1} - y_t \|$ 
  denotes the change in potential due to the movement of $ALG$.
  Recall that $x^*_t$ was the minimizer of the function $f_t$. The
  claim is translation invariant, so assume $x^*_t = 0$. This implies that $x_t = \g x_{t-1}$ for some
  $\g\in (0,1)$.  Lemma~\ref{lem:structure} applied to $y = y_t$,
  $x = x_{t-1}$ and $\g$, guarantees that one of the following holds:
  \begin{enumerate}
  \item[(i)] $\DTPhi = \| x_t - y_t \| - \| x_{t-1} - y_t \| \le -\tfrac1{\sqrt 2} \|x_t - x_{t-1}\|$.
  \item[(ii)] $\|y_t\| \ge \tfrac{1}{\sqrt 2} \|x_t\|$.
  \end{enumerate}

  \medskip\textbf{Case I:} Suppose $\DTPhi \le -\tfrac1{\sqrt 2} \|x_t - x_{t-1}\|$.
  Since $\cost_t(ALG) \le 2\|x_t-x_{t-1}\|$,
  \begin{align*}
 	\cost_t(ALG) + 2\sqrt 2\cdot \DTPhi
 	 &\le 2\|x_t-x_{t-1}\| - 2\|x_t-x_{t-1}\| 
 	 = 0 \\
 	 &\le (4+4\sqrt 2)\k\cdot f_t(y_t).
  \end{align*}
  This proves~(\ref{eq:2}).

\medskip\textbf{Case II:} Suppose that $\|y_t\| \ge \tfrac{1}{\sqrt 2} \|x_t\|$. 
By the well-conditioned assumption on $f_t$ (say, $f_t$ is $\a_t$-strongly-convex and $\a_t \k$ smooth) and the assumption that  
$0$ is the minimizer of $f_t$, we have
\begin{equation}\label{eq:k-competitive-2}
	f_t(x_t)
		\le \frac{\alpha_t \kappa}{2} \|x_t\|^2
		\le \alpha_t\kappa \|y_t\|^2
		\le 2 \kappa \cdot f_t(y_t).
\end{equation}
By the triangle inequality and choice of $x_t$ such that $f_t(x_t) = \|x_t - x_{t-1}\|$ we have
\[
\DTPhi =\| x_t - y_t \| -  \| x_{t-1} - y_t \|
	 \leq \|x_t - x_{t-1}\| = f_t(x_t).
\]
Using $\cost_t(ALG) = 2f_t(x_t)$,
\begin{align*}
 \cost_t(ALG) + 2\sqrt 2\cdot \DTPhi
	&\le 2f_t(x_t) + 2\sqrt 2 f_t(x_t)\\
	&\stackrel{(\ref{eq:k-competitive-2})}{\le} (4+4\sqrt 2) \k\cdot  f_t(y_t).
\end{align*}
This proves~(\ref{eq:2}) and hence the bound~(\ref{eq:1}) on the
amortized cost. Now summing~(\ref{eq:1}) over all times $t$, and using
that $\Phi_t \geq 0 = \Phi_0$, proves the competitiveness.
\end{proof}

We extend Theorem~\ref{thm:k-competitive} to the constrained 
setting (by a modified algorithm); see \S\ref{sec:constrained-m2m}. We also
extend the result to general norms by replacing 
Lemma~\ref{lem:structure} by Lemma~\ref{lem:general-norm}; details 
appear in \S\ref{sec:general-norm}. 
Moreover, the analysis of \MtoM is tight: 
in Proposition~\ref{prop:m2m-lbd} we show an instance for 
which the \MtoM algorithm has $\Omega(\k)$-competitiveness. 

\subsubsection{Well-Centered Functions}
\label{sec:well-centered}

The proof of \Cref{thm:k-competitive} did not require the full
strength of the well-conditioned assumption. In fact, it only required
that each function $f_t$ is $\k$-well-conditioned ``from the perspective of
its minimizer $x_t^*$'', namely that there is a constant $\a$ such that for all 
$x\in \R^d$,
\[\frac{\a}{2}\|x-x_t^*\|^2 \le f_t(x) \le \frac{\k\a}{2} \|x-x_t^*\|^2. \]
Motivated by this observation, we define a somewhat more general class of functions for which the \MtoM algorithm is competitive.

\begin{definition}
Fix scalars $\k,\g \ge 1$. A convex function $f: \R^d\to \R^+$ with minimizer $x^*$ is \emph{$(\k,\g)$-well-centered} if there is a constant $\a > 0$ such that for all $x\in \R^d$, 
\begin{equation*}
  \frac{\a}{2}\|x-x^*\|^\g \le f(x) \le \frac{\a\k}{2}\|x-x^*\|^\g. \label{eq:def-1}
\end{equation*}
\end{definition}

We can now give a more general result.

\begin{proposition}
  \label{cl:blah}
If each function $f_t$ is $(\k,\g)$-well centered, then with $c = 2 + 2\sqrt 2$, 
\[\cost_t(ALG) + 2\sqrt 2\cdot \Delta_t\Phi 
	\le c\cdot 2^{\g/2}\k \cdot \cost_t(OPT).\]
Hence, the \MtoM algorithm is $c2^{\g/2}\k$-competitive.
\end{proposition}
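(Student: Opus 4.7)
The plan is to follow the proof of Theorem~\ref{thm:k-competitive} essentially verbatim, since the only place where the well-conditioned (i.e.\ $\gamma=2$) assumption actually entered was the single hit-cost comparison in Case~II. Use the same potential $\Phi_t = \|x_t - y_t\|$, reduce (as before) to showing the per-step inequality
\[ \cost_t(ALG) + 2\sqrt{2}\cdot \widetilde{\Delta}_t \Phi \;\leq\; c\cdot 2^{\gamma/2}\,\kappa \cdot f_t(y_t), \]
translate so that $x_t^* = 0$, and write $x_t = \gamma' x_{t-1}$ for some $\gamma' \in (0,1)$ (I am overloading notation with the exponent $\gamma$, but it should be clear from context). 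Since the M2M algorithm plays $x_t$ with $\|x_t - x_{t-1}\| = f_t(x_t)$ and so $\cost_t(ALG) = 2f_t(x_t)$, these identities are unaffected by the well-centered parameter $\gamma$.

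Apply the Structure Lemma (Lemma~\ref{lem:structure}) to $x = x_{t-1}$, $y=y_t$, and scalar $\gamma'$. This is purely geometric and is agnostic to the function class, so the same dichotomy holds. Case~I, where $\widetilde{\Delta}_t \Phi \leq -\tfrac{1}{\sqrt{2}}\|x_t-x_{t-1}\|$, proceeds without change: $\cost_t(ALG) \le 2\|x_t-x_{t-1}\|$ gives $\cost_t(ALG) + 2\sqrt{2}\cdot \widetilde{\Delta}_t \Phi \le 0$, which trivially satisfies the desired inequality.

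The only substantive step is Case~II, where $\|y_t\| \geq \tfrac{1}{\sqrt{2}}\|x_t\|$, equivalently $\|x_t\|^\gamma \leq 2^{\gamma/2}\|y_t\|^\gamma$. Here I would replace the quadratic bound~(\ref{eq:k-competitive-2}) by the well-centered estimate: using the upper bound of the definition for $f_t(x_t)$ and the lower bound for $f_t(y_t)$,
\[ f_t(x_t) \;\leq\; \tfrac{\alpha\kappa}{2}\|x_t\|^\gamma \;\leq\; \tfrac{\alpha\kappa}{2}\cdot 2^{\gamma/2}\|y_t\|^\gamma \;\leq\; 2^{\gamma/2}\,\kappa\cdot f_t(y_t). \]
Combined with $\widetilde{\Delta}_t\Phi \leq \|x_t - x_{t-1}\| = f_t(x_t)$ (triangle inequality plus the defining equation of M2M), this yields
\[ \cost_t(ALG) + 2\sqrt{2}\cdot \widetilde{\Delta}_t\Phi \;\leq\; (2+2\sqrt{2})\,f_t(x_t) \;\leq\; (2+2\sqrt{2})\cdot 2^{\gamma/2}\,\kappa\cdot f_t(y_t), \]
so we may take $c = 2+2\sqrt{2}$. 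Summing over $t$ using $\Phi_0 = 0$ and $\Phi_T \geq 0$, and charging the ``$\|y_t - y_{t-1}\|$''-style slack in $\Delta_t\Phi$ versus $\widetilde{\Delta}_t\Phi$ to $OPT$'s movement cost exactly as in the proof of Theorem~\ref{thm:k-competitive}, gives the claimed $c\cdot 2^{\gamma/2}\kappa$-competitiveness. There is no real obstacle here; the point is just that Case~I is a geometric fact independent of $\gamma$, and Case~II's hit-cost comparison scales cleanly with the exponent to produce the extra $2^{\gamma/2}$ factor (recovering the original $4+4\sqrt{2}$ when $\gamma = 2$).
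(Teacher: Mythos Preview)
Your proposal is correct and follows exactly the same approach as the paper: the paper's proof simply says to replace inequality~(\ref{eq:k-competitive-2}) in Case~II of Theorem~\ref{thm:k-competitive} by the well-centered estimate $f_t(x_t) \le \tfrac{\alpha_t\kappa}{2}\|x_t\|^\gamma \le \tfrac{\alpha_t\kappa}{2}\cdot 2^{\gamma/2}\|y_t\|^\gamma \le 2^{\gamma/2}\kappa\cdot f_t(y_t)$, with the rest of the proof unchanged. You have reproduced precisely this argument, including the observation that Case~I is purely geometric and that the constant specializes to $4+4\sqrt{2}$ when $\gamma=2$.
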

\begin{proof}
 Consider the proof of \Cref{thm:k-competitive} and replace
(\ref{eq:k-competitive-2}) by
\[	f_t(x_t)
		\le \frac{\alpha_t \kappa}{2} \|x_t\|^\gamma
		\le \frac{\alpha_t\kappa}{2} \|y_t\|^\gamma \cdot 2^{\gamma / 2}
		\le 2^{\gamma / 2} \kappa \cdot f_t(y_t).
              \]
              The rest of the proof remains unchanged.
\end{proof}

\subsection{Constrained Online Balanced Descent: \texorpdfstring{$O(\sqrt{\kappa})$}{O(sqrt(kappa))}-Competitiveness}
\label{sec:constrained-obd}

The move-to-minimizer algorithm can pay a lot in one 
timestep if the function decreases slowly in the direction of the 
minimizer but decreases quickly in a different direction. 
In the unconstrained setting, the \emph{Online Balanced Descent}
algorithm addresses this by moving to a point $x_t$ such that $\|x_t - x_{t-1}\| = f_t(x_t)$, except it chooses the 
point $x_t$ to minimize $f_t(x_t)$. It therefore minimizes the instantaneous cost $\cost_t(ALG)$ among all algorithms that balance the movement and hit costs. This algorithm can be viewed
geometrically as projecting the point $x_{t-1}$ onto a level set of
the function $f_t$; see~\Cref{fig:obd-vs-mtm}. 

\begin{figure}
\begin{center}
\includegraphics[scale=1]{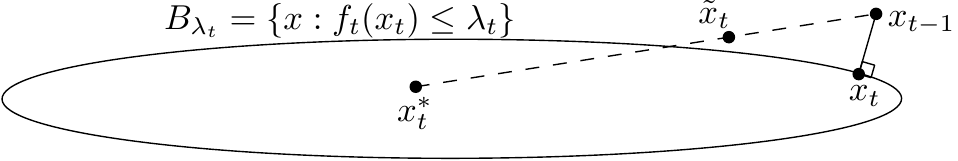}
\end{center}
\caption{The \OBD Algorithm and the comparison to \MtoM.
The point $x_{t-1}$ and the function $f_t$ with minimizer $x_t^*$ are given.
\OBD plays the point $x_t$ and \MtoM plays the point $\tilde{x}_t$.}
\label{fig:obd-vs-mtm}
\end{figure}

In the constrained setting, it may be the case that $\|x_t - x_{t-1}\| < f_t(x_t)$ for all feasible points. Accordingly, the \emph{Constrained Online Balanced Descent} (\COBD) algorithm moves to a point $x_t$ that minimizes $f_t(x_t)$ subject to $\|x_t - x_{t-1}\| \le f_t(x_t)$.

Formally, suppose that each $f_t$ is $\a_t$-strongly convex and $\b_t := \k
\a_t$-smooth, and let $x_t^*$ be the (global) minimizer of $f_t$,
which may lie outside $K$. As before, we
assume that $f_t(x_t^*) = 0$. 

\begin{quote}
  \textbf{The Constrained OBD Algorithm.} 
  Let $x_t$ be the solution to the (nonconvex) program $\min\{f_t(x) \mid
  \|x-x_{t-1}\| \le f_t(x), x\in K\}$. Move to the point
  $x_t$. (Regarding efficient implementation of \COBD, see 
  Remark~\ref{rem:efficient-cobd}.)
\end{quote}

As with \MtoM, the choice of $x_t$ such that $\|x_t-x_{t-1}\| \le f_t(x_t)$ implies that 
\[ \cost_t(ALG) = f_t(x_t) + \| x_t - x_{t-1}\| \le 2f_t(x_t).\]

\subsubsection{The Analysis.} 
Again, consider the potential function:
\begin{gather}
  \Phi_t := \| x_t - y_t \| \label{eq:pot}
\end{gather}
where $x_t$ is the point controlled
by the \COBD algorithm, and $y_t$ is the point controlled by the optimum
algorithm. We first prove two useful lemmas. The first lemma is a general 
statement about $\b$-smooth functions that is independent of the
algorithm.
\begin{lemma}\label{lem:grad-bound}
  Let convex function $f$ be $\b$-smooth. Let $x^*$ be the global 
  minimizer of $f$, and suppose $f(x^*) = 0$ (as discussed in 
  \S\ref{sec:definitions-notation}). Then for all $x\in \R^d$,
  \[\|\grad f(x) \| \le \sqrt{2\b f(x)}.\]
\end{lemma}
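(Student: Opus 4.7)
The plan is to use the standard ``descent lemma'' trick from convex optimization: evaluate the smoothness inequality at a suitable test point obtained from a gradient step, and then use $f(x^*) = 0$ as a lower bound on the resulting value.

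Concretely, I would start from the $\beta$-smoothness inequality in its upper-bound form, namely
\[
f(y) \le f(x) + \langle \nabla f(x), y - x\rangle + \frac{\beta}{2}\|y - x\|^2
\]
valid for all $x, y \in \R^d$. Then I would choose the specific test point $y = x - \frac{1}{\beta}\nabla f(x)$, which is exactly the minimizer of the right-hand side over $y$. Substituting, the linear and quadratic terms combine to give
\[
f(y) \le f(x) - \frac{1}{\beta}\|\nabla f(x)\|^2 + \frac{1}{2\beta}\|\nabla f(x)\|^2 = f(x) - \frac{1}{2\beta}\|\nabla f(x)\|^2.
\]

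Finally, I would invoke the assumption that $x^*$ is a global minimizer with $f(x^*) = 0$, so $f(y) \ge 0$. This yields $\frac{1}{2\beta}\|\nabla f(x)\|^2 \le f(x)$, which rearranges to $\|\nabla f(x)\| \le \sqrt{2\beta f(x)}$, as desired. There is no real obstacle here: the only ``choice'' is the test point $y$, and picking the minimizer of the smoothness upper bound is the canonical move; everything else is one line of algebra. Note that convexity of $f$ is not actually needed for this argument (only smoothness and the global minimum at $0$), though it is present by hypothesis.
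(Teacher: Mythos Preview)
Your proof is correct and essentially identical to the paper's: both pick the test point $z = x - \tfrac{1}{\beta}\nabla f(x)$, apply $\beta$-smoothness, and use $f(z)\ge 0$ to obtain $f(x)\ge \tfrac{1}{2\beta}\|\nabla f(x)\|^2$. The only difference is cosmetic ordering of the inequalities.
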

\begin{proof}
  The proof follows~\cite[Lemma 3.5]{Seb}. Define $z := x -\frac1{\b} \gr f(x)$. 
  Then
  \begin{align*}
    f(x) &\ge f(x) - f(z) \tag{since $f(z) \geq 0$}\\
    &\geq \ip{\gr f(x), x-z} - \frac{\b}{2}\|x-z\|^2 
      \tag{by $\b$-smoothness} \\
    &= \ip{\gr f(x), \frac1{\b} \gr f(x)} - \frac{1}{2\b} \| \gr f(x)
      \|^2 = \frac{1}{2\b} \| \gr f(x)
      \|^2.
  \end{align*}
  The conclusion follows.
\end{proof}

The second lemma is specifically about \COBD.

\begin{lemma}\label{lem:conic-comb}
For each $t \ge 1$, there is a constant $\lambda \ge 0$ and a vector $n$ in the normal cone to $K$ at $x_t$ such that $x_{t-1} - x_t = \lambda \grad f_t(x_t) + n$.
\end{lemma}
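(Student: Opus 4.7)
The plan is to derive the decomposition from the first-order optimality conditions for the \COBD optimization problem. The key reformulation is to recognize that the stated nonconvex program $\min\{f_t(x) : \|x - x_{t-1}\| \le f_t(x),\ x \in K\}$ is equivalent to the convex program $\min_{x \in K} F(x)$ with $F(x) := \max\bigl(f_t(x),\ \|x - x_{t-1}\|\bigr)$. Indeed, any optimum $x_t$ of the \COBD program satisfies $f_t(x_t) \ge \|x_t - x_{t-1}\|$, so $F(x_t) = f_t(x_t)$; conversely, any minimizer of $F$ on $K$ must satisfy this inequality (else moving slightly toward $x_{t-1}$ strictly decreases both terms of the max). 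Passing to this convex reformulation lets us bring standard subdifferential calculus to bear, rather than having to argue KKT in the nonconvex formulation.

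Next, I would invoke the subgradient optimality condition for constrained convex minimization, $0 \in \partial F(x_t) + N_K(x_t)$. Assume $x_t \ne x_{t-1}$ (else the lemma is trivial with $\lambda = 0$ and $n = 0$). The subdifferential of $\|\cdot - x_{t-1}\|$ at $x_t$ is then the singleton $\{u_t\}$ with $u_t := (x_t - x_{t-1})/\|x_t - x_{t-1}\|$, and in the generic case where the balance constraint is tight at $x_t$ (i.e., $f_t(x_t) = \|x_t - x_{t-1}\|$) the subdifferential of the max is the convex hull $\partial F(x_t) = \mathrm{conv}\{\grad f_t(x_t),\ u_t\}$. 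Optimality therefore produces $\theta \in [0,1]$ and $n' \in N_K(x_t)$ with
\[ \theta \grad f_t(x_t) + (1-\theta)\, u_t + n' = 0. \]

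A short algebraic rearrangement finishes the argument. For $\theta < 1$, multiplying through by $\|x_t - x_{t-1}\|/(1-\theta)$ and solving for $x_{t-1} - x_t$ yields
\[ x_{t-1} - x_t \;=\; \frac{\theta\,\|x_t - x_{t-1}\|}{1-\theta}\,\grad f_t(x_t) \;+\; \frac{\|x_t - x_{t-1}\|}{1-\theta}\,n', \]
which is exactly the claimed form with $\lambda := \theta\|x_t - x_{t-1}\|/(1-\theta) \ge 0$ and $n := \|x_t - x_{t-1}\|\,n'/(1-\theta) \in N_K(x_t)$, using that the normal cone is closed under scaling by non-negative reals.

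The main obstacle will be the degenerate configurations where the above scheme stalls: either the balance constraint is slack at $x_t$ (so $x_t$ coincides with the minimizer of $f_t$ on $K$ and $\partial F(x_t)$ collapses to $\{\grad f_t(x_t)\}$), or equivalently $\theta = 1$ in the tight case. In both subcases the subgradient optimality gives only $-\grad f_t(x_t) \in N_K(x_t)$, which does not by itself determine the direction $x_{t-1} - x_t$. I would try to dispose of them by a perturbation/limit argument---perturbing $x_{t-1}$ (or $f_t$) so the balance is strictly tight with $\theta < 1$, applying the generic case, and passing to the limit using closedness of the cone $\{\lambda \grad f_t(x_t) + n : \lambda \ge 0,\ n \in N_K(x_t)\}$---or by exploiting additional structure of the \COBD update that forces the remaining cases to collapse to $x_t = x_{t-1}$ or to an already-valid decomposition.
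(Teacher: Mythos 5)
Your generic-case argument is sound, and it takes a genuinely different route from the paper's. The paper freezes $r := \|x_t - x_{t-1}\|$ and shows that $x_t$ solves the smooth convex program $\min\{f_t(x) : \|x - x_{t-1}\|^2 \le r^2,\ x \in K\}$, then reads the decomposition off the KKT conditions; you instead minimize $F = \max\bigl(f_t(\cdot),\ \|\cdot - x_{t-1}\|\bigr)$ over $K$ and use the subdifferential of a pointwise maximum. Both reformulations are valid (your equivalence argument works, though the parenthetical is imprecise: when $f_t(x) < \|x - x_{t-1}\|$, moving toward $x_{t-1}$ need not decrease $f_t$; the point is that the max is locally the distance term, which does decrease), and your rearrangement from $\theta \grad f_t(x_t) + (1-\theta)u_t + n' = 0$ to the claimed form is correct when $\theta < 1$.

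The gap is the degenerate case you defer to the last paragraph, and the perturbation/limit fix you propose cannot work, because the lemma as stated is actually \emph{false} there. Take $K = \{x \in \R^2 : x_2 \ge 0\}$, $f_t(x) = x_1^2 + (x_2+1)^2$ (so $\k = 1$, with global minimizer $(0,-1) \notin K$ and zero minimum value), and $x_{t-1} = (\tfrac12, \tfrac12)$. The minimizer of $f_t$ over $K$ is the origin; it satisfies $\|0 - x_{t-1}\| = \tfrac{1}{\sqrt 2} < 1 = f_t(0)$, so \COBD plays $x_t = 0$ with the balance constraint slack. But $\grad f_t(0) = (0,2)$ and $N_K(0) = \{(0,-s) : s \ge 0\}$, so every vector $\lambda \grad f_t(x_t) + n$ has first coordinate zero, whereas $x_{t-1} - x_t = (\tfrac12,\tfrac12)$ does not. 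Since the claimed decomposition simply does not exist for this instance, no amount of perturbing and passing to limits can produce one. (The paper's own write-up has the same blind spot: ``the result follows'' from KKT only when the ball-constraint multiplier is strictly positive.) The honest fix is to add the hypothesis that $x_t$ is not the minimizer of $f_t$ over $K$, equivalently $-\grad f_t(x_t) \notin N_K(x_t)$; this forces $\theta < 1$ in your argument (and a positive multiplier in the paper's), and it is exactly the regime in which the lemma is invoked in Theorem~\ref{thm:sqrt-k-competitive}, where $f_t(y_t) \le \tfrac12 f_t(x_t)$ with $y_t \in K$ rules out $x_t$ being the constrained minimizer. Outside that regime only the trivial subcase $x_t = x_{t-1}$ should be claimed.
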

\begin{proof}
Let $r = \|x_t - x_{t-1}\|$.
We claim that $x_t$ is the solution to the following \emph{convex} program:
\begin{align*}
\min \quad & f_t(x)\\
\text{s.t.} \quad  & \|x - x_{t-1}\|^2 \le r^2\\
& x\in K
\end{align*}
Given this claim, the KKT conditions imply that there is a constant $\g\ge 0$ such that $\grad f_t(x) + \g (x_t - x_{t-1})$ is in the normal cone to $K$ at $x_t$ and the result follows.

We now prove the claim.
Assume for a contradiction that the solution to this program is a
point $z\neq x_t$. We have $f_t(z)< f_t(x_t)$. Since $z\in K$ and
$x_t$ is the optimal solution to the nonconvex program $\min\{f_t(x) \mid \|x-x_{t-1}\| \le f_t(x), x\in K\}$,
 we have $f(z) < \|z-x_{t-1}\|$. But considering the line segment with endpoints $z$ and $x_{t-1}$, the intermediate value theorem implies that there is a point $z'$ on this segment such that $f(z') = \|z'-x_{t-1}\|$. This point $z'$ is feasible for the nonconvex program and \[f(z') = \|z'-x_{t-1}\| < \|z-x_{t-1}\| = f(z) < f(x).\]
This contradicts the choice of $x_t$. The claim is proven, hence
the proof of the lemma is complete.
\end{proof}

\begin{remark}\label{rem:efficient-cobd}
The convex program given in the proof can be used to to find $x_t$ efficiently. In particular, let $r^*$ denote the optimal value to the nonconvex program. For a given $r$, if the solution to the convex program satisfies $f_t(x) < r$, then $r^* < r$. Otherwise, $r^* \ge r$. Noting that $0\le f_t(x_t)\le f_t(x_{t-1})$, run a binary search to find $r^*$ beginning with $r = \frac12 f_t(x_{t-1})$.
\end{remark}

\begin{theorem}
  \label{thm:sqrt-k-competitive}
  With $c = 2\sqrt{2\k}$, for each time $t$ it holds that
  \[\cost_t (ALG) + c\cdot \Delta_t \Phi 
    \leq 2(2+c)\cdot \cost_t(OPT).\]
  Hence, the \COBD algorithm is $2(2+c) = O(\sqrt{\k})$-competitive.
\end{theorem}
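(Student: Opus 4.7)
The plan is to follow the amortized-analysis template already used for \Cref{thm:k-competitive}: by the triangle inequality, $\Delta_t \Phi \le \DTPhi + \|y_t - y_{t-1}\|$, and the extra $c \|y_t - y_{t-1}\|$ slack can be absorbed on the right-hand side since $c \le 2(2+c)$. So it suffices to prove the one-step bound
\[
\cost_t(ALG) + c \cdot \DTPhi \le 2(2+c)\, f_t(y_t).
\]
I would then split by whether the balance inequality that defines \COBD is tight at $x_t$.

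If $\|x_t - x_{t-1}\| < f_t(x_t)$, then the contradiction argument inside the proof of \Cref{lem:conic-comb} forces $x_t$ to be the minimizer of $f_t$ on $K$, whence $f_t(y_t) \ge f_t(x_t) =: \rho_t$ because $y_t \in K$; the bound is then immediate, since $\cost_t(ALG) \le 2\rho_t$ and $\DTPhi \le \|x_t - x_{t-1}\| \le \rho_t$. The interesting case is $r := \|x_t - x_{t-1}\| = \rho_t$. There I would start from the algebraic identity
\[
\|x_{t-1} - y_t\|^2 - \|x_t - y_t\|^2 = r^2 + 2\langle x_{t-1} - x_t,\, x_t - y_t\rangle,
\]
and invoke \Cref{lem:conic-comb} to expand $x_{t-1} - x_t = \lambda\, \grad f_t(x_t) + n$ with $\lambda \ge 0$ and $n$ in the normal cone to $K$ at $x_t$. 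Since $y_t \in K$, $\langle n, x_t - y_t\rangle \ge 0$, while convexity of $f_t$ gives $\langle \grad f_t(x_t), x_t - y_t\rangle \ge \rho_t - f_t(y_t)$, so these yield $\|x_{t-1}-y_t\|^2 - \|x_t-y_t\|^2 \ge r^2 + 2\lambda(\rho_t - f_t(y_t))$.

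Writing $A := \|x_{t-1}-y_t\|$ and $B := \|x_t - y_t\|$, I would then factor $A^2 - B^2 = (A-B)(A+B)$ to turn the inequality above into a lower bound on $-\DTPhi = A - B$, controlling $A+B$ via strong convexity (which bounds $B$ in terms of $\sqrt{f_t(y_t)/\alpha_t}$ and $\sqrt{\rho_t/\alpha_t}$) and the gradient bound $\|\grad f_t(x_t)\| \le \sqrt{2\alpha_t \kappa \rho_t}$ from \Cref{lem:grad-bound}, which in turn lower-bounds $\lambda$ through $r \le \lambda \|\grad f_t(x_t)\| + \|n\|$. The $O(\sqrt{\kappa})$ factor should emerge from an AM-GM-style balance between the potential-decrease contribution $\sim r^2/(A+B)$ and the strong-convexity lower bound $f_t(y_t) \gtrsim \alpha_t B^2$.

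The main obstacle is precisely this final quadratic balance: strong convexity scales with $\alpha_t$ and smoothness with $\alpha_t \kappa$, but the competitive ratio must depend only on their ratio $\kappa$, and only through $\sqrt{\kappa}$; getting this requires picking the AM-GM split on the nose so that the $\alpha_t$-dependence cancels. A related subtlety is that the normal-cone part $n$ could in principle be anti-aligned with $\grad f_t(x_t)$, weakening the lower bound on $\lambda$ derived from $r \le \lambda\|\grad f_t(x_t)\| + \|n\|$; the analysis must use that the same $n$ contributes $\langle n, x_t - y_t\rangle \ge 0$ on the other side, so that the net effect still drives $B$ below $A$ by the required amount.
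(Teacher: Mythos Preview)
Your reduction to the one-step bound and the use of \Cref{lem:conic-comb} and \Cref{lem:grad-bound} are exactly how the paper proceeds, and your worry about the normal-cone term is handled more simply than you fear: since $x_{t-1}\in K$ and $n$ is in the normal cone at $x_t$, one has $\langle n, x_{t-1}-x_t\rangle \le 0$, whence $\|\lambda\,\grad f_t(x_t)\|^2 = \|x_{t-1}-x_t - n\|^2 \ge \|x_{t-1}-x_t\|^2$ directly, so $r \le \lambda\|\grad f_t(x_t)\|$ without any interplay with the other term.

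Where your sketch diverges from the paper is in \emph{where} strong convexity enters, and this is precisely the ``on the nose'' step you flag as the obstacle. You use only plain convexity in the inner product, $\langle \grad f_t(x_t), x_t - y_t\rangle \ge \rho_t - f_t(y_t)$, and defer strong convexity to a separate bound on $B=\|x_t-y_t\|$. The paper instead uses strong convexity \emph{in} the first-order inequality,
\[
-\langle \grad f_t(x_t),\, y_t - x_t\rangle \;\ge\; \bigl(f_t(x_t)-f_t(y_t)\bigr) + \tfrac{\alpha_t}{2}\|x_t-y_t\|^2,
\]
and then splits cases on whether $f_t(y_t)\ge \tfrac12 f_t(x_t)$ rather than on tightness. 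In the nontrivial case $f_t(y_t)\le \tfrac12 f_t(x_t)$ the right-hand side is at least $\tfrac12 f_t(x_t)+\tfrac{\alpha_t}{2}\|x_t-y_t\|^2$; dividing by $\|\grad f_t(x_t)\|\cdot\|x_t-y_t\|$, invoking \Cref{lem:grad-bound}, and applying AM--GM to the two summands immediately gives $-\cos\theta_t \ge 1/\sqrt{2\kappa}$, with the $\alpha_t$ cancelling automatically. The law-of-cosines manipulation you describe then yields $\cost_t(ALG) \le (-2\sec\varphi_t)(-\DTPhi) \le c\,(-\DTPhi)$. Without folding the $\tfrac{\alpha_t}{2}\|x_t-y_t\|^2$ term into the same inequality, your route has to juggle the scales $\rho_t$, $\sqrt{\rho_t/\alpha_t}$, and $\sqrt{\rho_t/(\alpha_t\kappa)}$ simultaneously in $(A^2-B^2)/(A+B)$, and it is not clear from your sketch that the balance closes; the paper's placement of strong convexity is what makes the AM--GM split exact.
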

\begin{proof}
  As in the proof of~\Cref{thm:k-competitive}, 
  it suffices to show that
  \begin{gather}
    \cost_t(ALG) + c\cdot \DTPhi \le 2(2+c)\cdot f_t(y_t), 
  \end{gather}
  where $\DTPhi:= \|x_t - y_t\| - \|x_{t-1} - y_t\|$ is the change in potential
  due to the movement of $ALG$.  
  
  There are two cases, depending on the value of $f_t(y_t)$ versus the value of
  $f_t(x_t)$. In the first case, $f_t(y_t) \ge \frac{1}{2}
  f_t(x_t)$. The triangle inequality bounds 
  $\DTPhi = \| x_t - y_t \| - \| x_{t-1} - y_t \|\le \| x_t - x_{t-1}\|
   \le f_t(x_t)$. 
  Also using $\cost_t(ALG) \le 2f_t(x_t)$, we have
  \begin{gather*}
   \cost_t(ALG) + c\cdot \DTPhi
    	\le 2f_t(x_t) + c f_t(x_t)
    	\le 2(2+c)\cdot f_t(y_t). 
  \end{gather*}
	
  In the other case, $f_t(y_t) \le \frac{1}{2} f_t(x_t)$. 
  Note that this implies that $x_t$ is not the minimizer of $f_t$ on the 
  set $K$. Any move in the direction of the minimizer gives a point in $K$ with lower hit cost, but this point cannot be feasible for the nonconvex program. Therefore, at the point $x_t$, the constraint relating the hit cost to the movement cost is satisfied with equality: 
  $\|x_t-x_{t-1}\| = f_t(x_t)$.
  
  Let $\theta_t$ be the angle formed by the vectors $\grad f_t(x_t)$ 
  and $y_t-x_t$; see~\Cref{fig:obd-analysis}.
We now have
  \begin{align*}
    - \la \grad f_t(x_t), y_t-x_t\ra 
	&\ge f_t(x_t) - f_t(y_t) + \frac{\a_t}{2}\|x_t-y_t\|^2 \tag{by strong convexity}\\
	& \ge \frac{1}{2} f_t(x_t)  + \frac{\a_t}{2}\|x_t-y_t\|^2 \tag{since $f(y_t) \le \frac{1}{2} f(x_t)$}\\
\implies - \cos \theta_t
	&\ge \frac{\frac{1}{2}(f_t(x_t) + \a_t\|x_t-y_t\|^2)}{\|\grad f_t(x_t)\|\cdot \|x_t-y_t\|}\\
	&\ge \frac{\frac{1}{2}(f_t(x_t) +
   \a_t\|x_t-y_t\|^2)}{\sqrt{2\a_t \k \; f_t(x_t)}\cdot \|x_t-y_t\|} 
	\tag{by Lemma~\ref{lem:grad-bound}}\\
	&\ge \frac{1}{\sqrt{2 \k}}  \tag{by the AM-GM inequality}
\end{align*}

  By Lemma~\ref{lem:conic-comb}, we have $x_{t-1} - x_t = \lambda \grad f_t(x_t) + n$ for some $n$ in the normal cone to $K$ at point $x_t$. Since $y_t\in K$ we have $\la n, y_t - x_t\ra \le 0$. This gives
  \begin{equation} \label{eq:thm10-1}
  - \la x_{t-1} - x_t, y_t - x_t \ra
  	= - \la \lambda \grad f_t(x_t) + n, y_t - x_t \ra
	\ge - \lambda \grad \la f_t(x_t), y_t - x_t\ra
  \end{equation}
  Furthermore, we have $\lambda \grad f_t(x_t) = (x_{t-1} - x_t) - n$, and since $\la x_{t-1} - x_t, n \ra < 0$ we have
  \begin{equation}\label{eq:thm10-2}
   \|x_{t-1} - x_t\| \le \lambda \|\grad f_t(x_t)\|
  \end{equation}
  Let $\varphi_t$ be the angle formed by the vectors $x_{t-1}-x_t$ 
  and $y_t-x_t$; see~\Cref{fig:obd-analysis}.
  \begin{figure}[b]
  \begin{center}
	\includegraphics[scale=.8]{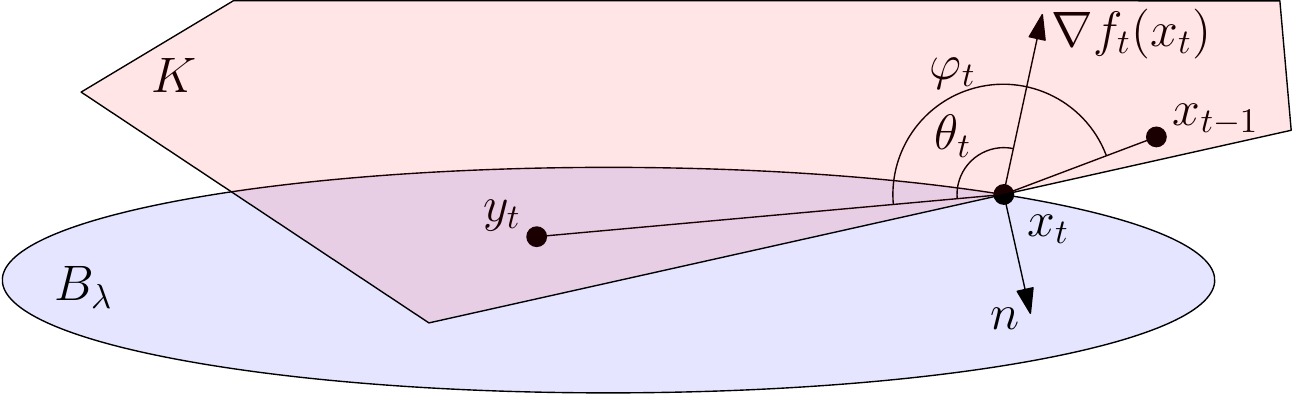}
  \end{center}
  \caption{Proof of Theorem~\ref{thm:sqrt-k-competitive}, case when 
  $f_t(y_t) \le \frac{1}{2}f_t(x_t)$. $B_\lambda$ is the sublevel set of $f_t$ with $x_t$ is on its boundary. }
  \label{fig:obd-analysis}
\end{figure}

Combining the previous three inequalities,
  \begin{align*}
  -\sec\varphi_t
  	&= \frac{\|x_{t-1} - x_t\|\cdot \|y_t - x_t\|}{-\la x_{t-1}-x_t, y_t - x_t \ra }\\
	&\le \frac{\lambda \|\grad f_t(x_t)\|\cdot \|y_t - x_t\|}{-\lambda \la \grad f_t, y_t - x_t\ra} \tag{by (\ref{eq:thm10-1}), (\ref{eq:thm10-2})}\\
	&= -\sec \th_t \\
	&\le \sqrt{2\kappa} = \frac{c}{2}
  \end{align*}
Now the law of cosines gives:
  \[\|x_t - x_{t-1}\|^2 - 2\|x_t - x_{t-1}\|\cdot \|x_t-y_t\|\cos \varphi_t
  =  \|x_{t-1}-y_t\|^2 - \|x_t - y_t\|^2.\]
Rearranging:
\begin{align*}
\|x_t - x_{t-1}\|
	&=  \left(\frac{\|x_{t-1} - y_t\| + \|x_t - y_t\|}{\|x_t - x_{t-1}\| - 2\|x_t-y_t\| \cos \varphi_t} \right)
	\Big(\|x_{t-1} - y_t\| - \|x_t - y_t\|\Big)\\
	&\le  \left(\frac{\|x_t - x_{t-1}\| + 2\|x_t-y_t\|}{\|x_t - x_{t-1}\| - 2\|x_t-y_t\| \cos \varphi_t} 
	\right) \Big(\|x_{t-1} - y_t\| - \|x_t - y_t\|\Big) \tag{triangle inequality}\\
	& \le -(\sec \varphi_t) \cdot \Big(\|x_{t-1} - y_t\| - \|x_t - y_t\|\Big).
\end{align*}
To see the last inequality, recall that $-\cos \varphi_t > 0$;
hence $\frac{a+b}{a+b(-\cos \varphi_t)} 
\le \frac{a+b}{(a+b)(-\cos \varphi_t)} = -\sec \varphi_t$.
Using that $\cost_t(ALG) = 2\|x_t - x_{t-1}\|$, we can rewrite the 
inequality above as 
\[\cost_t(ALG) 
	- (2\sec \varphi_t) \cdot \DTPhi \le 0.\]
Finally, observe that since $y_t\in B_{\lambda_t}$, 
we have $\DTPhi \le 0$.
Using the fact that $-\sec(\varphi_t) \le \frac{c}{2}$, 
\[\cost_t(ALG) + c\DTPhi \le \cost_t(ALG) - (2\sec \varphi_t) \cdot \DTPhi
	\le 0 \le 2(2+c)\cdot  f_t(y_t).\]
This completes the proof.
\end{proof}

Again, our analysis of \COBD is tight: In Proposition~\ref{prop:obd-lbd} we show an instance for which the \COBD algorithm has $\Omega(\sqrt\k)$-competitiveness, even in the unconstrained setting.

\def\Span{\text{span}}

\section{Chasing Low-Dimensional Functions}
\label{sec:subspaces}

In this section we prove \Cref{thm:main-sub}, our main result for
chasing low-dimensional convex functions. We focus our attention to
the case where the functions $f_t$ are indicators of some affine
subspaces $K_t$ of dimension $k$, i.e., $f_t(x) = 0$ for $x \in K_t$
and $f_t(x) = \infty$ otherwise. (The extension to the case where we
have general convex functions supported on $k$-dimensional affine
subspaces follows the same arguments.)
The main ingredient in the proof of chasing low-dimensional affine
subspaces is the following dimension-reduction theorem:

\begin{theorem}
  \label{thm:reduction}
  Suppose there is an $g(d)$-competitive algorithm for chasing convex
  bodies in $\R^d$, for each $d \geq 1$. 
  Then for any $k \leq d$, there is a
  $g(2k+1)$-competitive algorithm to solve instances of chasing convex
  bodies in $\R^d$ where each request lies in an affine subspace of
  dimension at most $k$.
\end{theorem}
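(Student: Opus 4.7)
The plan is to reduce the $k$-dimensional body-chasing instance in $\R^d$ to a standard body-chasing instance in $\R^{2k+1}$ by maintaining, at each time $t$, an online isometric embedding $\phi_t$ of a portion of $\R^d$ into $\R^{2k+1}$ that stays consistent with the history. After seeing $K_t$, I define $A_t := \aff(K_{t-1} \cup K_t)$ (with the convention $K_0 := \{x_0\}$), choose an isometric embedding $\phi_t : A_t \to \R^{2k+1}$ that agrees with $\phi_{t-1}$ on $K_{t-1}$, feed the body $\phi_t(K_t) \subseteq \R^{2k+1}$ to the assumed body-chasing algorithm $\cA$, and then play $x_t := \phi_t^{-1}(p_t) \in K_t$, where $p_t$ is $\cA$'s response.

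First I would check that such a $\phi_t$ always exists. Because $K_{t-1}$ and $K_t$ each have affine dimension at most $k$, the affine hull $A_t$ has dimension at most $\dim(\aff K_{t-1}) + \dim(\aff K_t) + 1 \le 2k+1$, so abstractly $A_t$ embeds isometrically into $\R^{2k+1}$. To arrange agreement with $\phi_{t-1}$ on $K_{t-1}$, I extend the already-defined isometry $\phi_{t-1}|_{\aff K_{t-1}}$ (whose image sits in an affine subspace of dimension at most $k$) to the larger space $A_t$: the directions of $A_t$ orthogonal to $\aff K_{t-1}$ form a subspace of dimension at most $(2k+1)-k = k+1$, and the orthogonal complement of $\phi_{t-1}(\aff K_{t-1})$ inside $\R^{2k+1}$ has dimension at least $k+1$, so I can isometrically map the new directions into this complement to finish the construction.

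With the embeddings in hand, cost preservation is essentially automatic. Since $\phi_t$ and $\phi_{t-1}$ agree on all of $K_{t-1}$, and the previous points $x_{t-1}$ and $y_{t-1}$ (for the algorithm and for any optimal trajectory) both lie in $K_{t-1}$, the algorithm's per-step displacement $\|\phi_t(x_t) - \phi_t(x_{t-1})\|$ in the embedded space equals the true $\|x_t - x_{t-1}\|$ by isometry, and the black box perceives no inconsistency when switching from $\phi_{t-1}$ to $\phi_t$. The same argument, applied to the offline optimum's trajectory $(y_t)$, shows that $(\phi_t(y_t))$ is feasible for the embedded instance with total cost exactly equal to $\OPT$; hence the embedded optimum is at most $\OPT$. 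Combining, the cost of my algorithm in $\R^d$ equals the cost of $\cA$ in $\R^{2k+1}$, which is at most $g(2k+1) \cdot \OPT$.

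The main subtlety is the dimension bookkeeping that lets $\phi_t$ simultaneously agree with $\phi_{t-1}$ on the previous request and embed the new request inside $\R^{2k+1}$; once that is settled, the rest is routine. \Cref{thm:main-sub} then follows by plugging in the best known $O(\min(d, \sqrt{d \log T}))$-competitive body-chasing algorithms with $d = 2k+1$, together with the observation that the identical embedding argument handles convex functions supported on $k$-dimensional affine subspaces, not just bodies.
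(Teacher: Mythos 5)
Your proposal is correct and follows essentially the same path as the paper's proof: construct online a sequence of isometries agreeing with the previous one on $K_{t-1}$, bound $\dim\!\big(\aff(K_{t-1}\cup K_t)\big)\le 2k+1$, and observe that agreement on $K_{t-1}$ makes the embedded instance cost-equivalent for both the algorithm and the offline optimum. The only cosmetic difference is that the paper composes affine isometries $R_t$ of $\R^d$ that map $K_t$ into a fixed $(2k+1)$-dimensional subspace $L$, whereas you carry partial isometric embeddings $\phi_t$ directly into $\R^{2k+1}$; the dimension bookkeeping and the key lemma are the same.
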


In particular, \Cref{thm:reduction} implies that there is a
$(2k+1)$-competitive algorithm for chasing subspaces of dimension at
most $k$, and hence proves \Cref{thm:main-sub}.\\

\begin{proof}
  Suppose we have a chasing convex bodies instance
  $K_1, K_2, \dots, K_T$ such that each $K_t$ lies in some
  $k$-dimensional affine subspace. We construct another sequence
  $K_1', \dots, K_T'$ such that (a)~there is a single $2k+1$
  dimensional linear subspace $L$ that contains each $K'_t$, and
  (b) there is a feasible point sequence $x_1,\dots, x_T$ of cost $C$ for the
  initial instance if and only if there is a feasible point sequence
  $x'_1,\dots, x'_T$ for the transformed instance with the same cost.
  We also show that
  the transformation from $K_t$ to $K_t'$, and from $x_t'$ back to
  $x_t$ can be done online, resulting in the claimed algorithm.

  Let $\Span(S)$ denote the affine span of the set $S\subseteq
  \R^d$. Let $\dim(A)$ denote the dimension of an affine subspace
  $A\subseteq \R^d$. The construction is as follows: let $L$ be an
  arbitrary $(2k+1)$-dimensional linear subspace of $\R^d$ that
  contains $K_1$. We construct online a sequence of affine isometries
  $R_1,\dots, R_T$ such that for each $t > 1$:
  \begin{enumerate}
  \item[(i)] $R_t(K_t)\subseteq L$.
  \item[(ii)] $\|R_t(x_t) - R_{t-1}(x_{t-1}) \| = \|x_t - x_{t-1}\|$ for any 
    $x_{t-1}\in K_{t-1}$ and $x_t\in K_t$.
  \end{enumerate}
  Setting $x_t' = R_t(x_t)$ then achieves the goals listed above. To
  get the affine isometry $R_t$ we proceed inductively: let $R_1$ be the identity map,
  and suppose we have constructed $R_{t-1}$.  Let
  $A_t:= \Span(R_{t-1}(K_t) \cup R_{t-1}(K_{t-1}))$. Note that
  $\dim(A_t) \le 2k+1$. Let $\rho_t$ be an affine isometry that fixes
  $\Span(R_{t-1}(K_{t-1}))$ and maps $\Span(R_{t-1}(K_t))$ into
  $L$. Now define $R_t = \rho_t\circ R_{t-1}$.  Property~(i) holds by
  construction. Moreover, since $x_{t-1}\in K_{t-1}$, we have
  $R_t(x_{t-1}) = R_{t-1}(x_{t-1})$. Furthermore, $R_t$ is an isometry
  and hence preserves distances. Thus,
  \[\|R_t(x_t) - R_{t-1}(x_{t-1}) \| 
    = \|R_t(x_t) - R_t(x_{t-1}) \| = \|x_t - x_{t-1}\|. \]
  This proves (ii).

  Note that $R(x_1,\dots, x_T) := (R_1(x_1), \dots, R_T(x_T))$ is a
  cost-preserving bijection between point sequences that are feasible
  for $\{K_t\}_t$ and $\{K_t'\}_t$ respectively. It now follows that
  the instances $\{K_t\}_t$ and $\{K_t'\}_t$ are equivalent in the
  sense that $OPT(K_1',\dots, K_T') = OPT(K_1,\dots, K_T)$, and an
  algorithm that plays points $x'_t\in K'_t$ can be converted into an
  algorithm of equal cost that plays points $x_t\in K_t$ by letting
  $x_t = R_t^{-1}(x'_t)$.  However, each of $K_1', \dots, K_T'$ is
  contained in the $(2k+1)$ dimensional subspace $L$, and thus we get
  the $g(2k+1)$-competitive algorithm.
\end{proof}

Using the results for CFC/SOCO, this immediately gives an
$O(\min(k, \sqrt{k \log T}))$-competitive algorithm to chase convex
bodies lying in $k$-dimensional affine subspaces. Moreover, the lower
bound of \cite{FL93} immediately extends to show an $\Omega(\sqrt{k})$
lower bound for $k$-dimensional subspaces. Finally, the proof for
$k$-dimensional functions follows the same argument, and is deferred
for now.

{\small
\bibliographystyle{alpha}
\bibliography{well-conditioned}
}

\appendix

\section{Lower Bounds}
\label{sec:lower-bounds}

In this section, we show a lower bound of $\Omega(\k^{1/3})$ on the
competitive ratio of convex function chasing for $\k$-well-conditioned
functions. We also show that our analyses of the \MtoM and \COBD
algorithms are tight: that they have competitiveness $\Omega(\k)$ 
and $\Omega(\sqrt{\k})$ respectively. In both examples, we take
$K=\R^d$ to be the action space.

\subsection{A Lower Bound of \texorpdfstring{$\Omega(\k^{1/3})$}{Omega(cube-root-kappa)}}

The idea of the lower bound is similar to the $\Omega(\sqrt{d})$ lower
bound~\cite{FL93}, which we now sketch. In this lower bound, the
adversary eventually makes us move from the origin to some vertex
$\pmb{\e} = (\e_1, \e_2, \ldots, \e_d)$ of the hypercube
$\{-1,1\}^d$. At time $t$, the request $f_t$ forces us to move to the
subspace $\{ x \mid x_i = \e_i\; \forall i \leq t \}$. Not knowing the
remaining coordinate values, it is best for us to move along the
coordinate directions and hence incur the $\ell_1$ distance of
$d$. However the optimal solution can move from the origin to
$\pmb{\e}$ along the diagonal and incur the $\ell_2$ distance of
$\sqrt{d}$. Since the functions $f_t$ in this example are not well-conditioned, we
approximate them by well-conditioned functions; however, this causes
the candidate $OPT$ to also incur nonzero hit costs, leading to the
 lower bound of $\Omega(\k^{1/3})$ when we balance the hit and
 movement costs.

We begin with a lemma analyzing a general instance defined by several parameters, 
and then achieve multiple lower bounds by appropriate choice of the parameters.

\begin{lemma}
  \label{lem:lbd}
  Fix a dimension $d$, constants $\g >0$ and $\lambda \ge \mu \ge 0$.
  Given any algorithm $ALG$ for chasing convex functions, there is a request 
  sequence $f_1, f_2, \dots, f_d$ that satisfies: \vspace{-4pt}
  \begin{enumerate}
	\item[(i)] Each $f_t$ is $2\mu$-strongly-convex and $2\lambda$-smooth (hence
	$(\lambda/\mu)$-well-conditioned.)
	\item[(ii)] $OPT \le \g(1+\mu d^{3/2} \g) \sqrt{d}$.
	\item[(iii)] $ALG \ge (\gamma - \frac{1}{4\lambda})d$.
  \end{enumerate}
\end{lemma}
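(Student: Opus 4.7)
The plan is to construct a well-conditioned quadratic version of the Friedman--Linial hypercube instance sketched at the start of the section. I would pick labels $\e_1, \e_2, \dots, \e_d \in \{-\gamma, +\gamma\}$ online, one per time step, and define
\[
  f_t(x) \;:=\; \lambda \sum_{i=1}^{t} (x_i - \e_i)^2 \;+\; \mu \sum_{i=t+1}^{d} x_i^2.
\]
The Hessian of $f_t$ is the diagonal matrix with $2\lambda$ in the first $t$ entries and $2\mu$ in the remaining $d-t$ entries, so $f_t$ is simultaneously $2\mu$-strongly convex and $2\lambda$-smooth; this establishes (i).

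For (ii) the candidate offline solution jumps from the origin to $\pmb{\e} := (\e_1, \dots, \e_d)$ on the very first step and remains there for the rest of the instance. The movement cost is $\|\pmb{\e}\| = \gamma\sqrt{d}$, and since $\pmb{\e}$ zeroes out the ``$\lambda$'' part of each $f_t$, the hit cost at time $t$ reduces to $f_t(\pmb{\e}) = \mu(d-t)\gamma^2$. Summing over $t$ bounds the total hit cost by $\mu\gamma^2 d^2$, and combined with the movement cost this gives $OPT \le \gamma\sqrt{d} + \mu\gamma^2 d^2 = \gamma(1 + \mu\gamma d^{3/2})\sqrt{d}$.

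For (iii) the adversary chooses $\e_t := -\gamma \cdot \sgn((x_{t-1})_t)$ (breaking ties by $+\gamma$), so that $(x_{t-1})_t$ and $\e_t$ sit on opposite sides of $0$ and $|(x_{t-1})_t - \e_t| \ge \gamma$. Writing $\delta_t := |(x_t)_t - \e_t|$, projecting the movement onto coordinate $t$ and applying the triangle inequality gives
\[
  \|x_t - x_{t-1}\| \;\ge\; |(x_t)_t - (x_{t-1})_t| \;\ge\; \gamma - \delta_t,
\]
while the $t$-th summand of $f_t(x_t)$ yields $f_t(x_t) \ge \lambda \delta_t^2$. Minimizing the lower bound $(\gamma - \delta) + \lambda\delta^2$ over $\delta \ge 0$ (the optimum sits at $\delta = 1/(2\lambda)$) shows that $\|x_t - x_{t-1}\| + f_t(x_t) \ge \gamma - \frac{1}{4\lambda}$ for every $t$, and summing over the $d$ steps yields (iii).

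The main obstacle, such as it is, lies in the hit-versus-movement tradeoff in (iii): one must justify projecting onto the single coordinate $t$ to bound both the full Euclidean movement and the full hit cost, and verify that the inequality $(\gamma - \delta) + \lambda \delta^2 \ge \gamma - \frac{1}{4\lambda}$ holds in both regimes $\delta \le \gamma$ and $\delta > \gamma$ (the latter reducing to the identity $(2\lambda\gamma - 1)^2 \ge 0$). Everything else --- the Hessian computation in (i), the hit-cost telescope in (ii), and the adaptive adversary in (iii) --- is routine verification.
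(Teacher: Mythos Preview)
Your proof is correct and uses exactly the same function construction and candidate $OPT$ as the paper. The only difference is in part~(iii): the paper picks each $\e_t$ uniformly at random and lower-bounds the algorithm's \emph{expected} per-step cost via the same one-variable quadratic minimization (writing the algorithm's $t$-th coordinate as $a$ or $b$ according to the sign of $\e_t$), whereas you choose $\e_t$ adaptively against the sign of $(x_{t-1})_t$ and bound the cost deterministically---both routes collapse to minimizing $(\gamma-\delta)+\lambda\delta^2$ over $\delta$, and since the paper already restricts to deterministic algorithms without loss of generality, the two arguments are interchangeable.
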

\begin{proof}
  Consider the instance where at each time $t \in \{1, \ldots, d\}$, we
  pick a uniformly random value $\e_t \in \{-1,1\}$, and set
  \[f_t(x) = \lambda \sum_{i=1}^t (x_i-\g \e_i)^2 + \mu\sum_{i=t+1}^d x_i^2.\]
  One candidate for $OPT$ is to move to the point
  $\g\pmb{\e} := (\g \e_1, \g\e_2, \ldots, \g\e_d)$, and take all the functions
  at that point. The initial movement costs $\g\sqrt{d}$, and the $t^{\text{th}}$
  timestep costs $f_t(\g\pmb{\e}) = \mu(d-t)\g^2$. Hence, the total cost over the
  sequence is at most 
  \[\g\sqrt{d} + \mu\binom{d}{2}\g^2
  	\leq \g\Big(1+\mu d^{3/2} \g\Big) \sqrt{d}.\]  
  Suppose the algorithm is at the point $\mathbf{z} = (z_1, \ldots, z_d)$ 
  after timestep $t-1$, and it moves to point 
  $\mathbf{z}' = (z'_1, \ldots, z'_d)$ at the next timestep. 
  Moreover, suppose the algorithm sets $z'_t=a$ when it sees $\e_t = 1$, and
  sets $z'_t = b$ if $\e_t=-1$. 
  Then for timestep $t$, the algorithm pays in expectation at least
  \begin{align*}
  &\frac12 [\lambda(a - \g)^2 + |a - z_t|] 
  		+  \frac12 [\lambda(b+\g)^2 + | b - z_t| ] \\
  &= \frac{\lambda}{2} \left[(a^2 - 2\g a + \g^2) 
  		+ (b^2 + 2\g b + \g^2) \right] 
  		+  \frac12[|a - z_t| + | b - z_t| ] \\
  &\geq \frac{\lambda}{2} \left[(a^2 - 2\g a + \g^2) 
  		+ (b^2 + 2\g b + \g^2) \right] + \frac12 (a -b)  \\
  &= \frac{\lambda}{2} 
  		\left[\left(a^2 - \left(2\g-\frac{1}{\lambda}\right) a + \g^2\right) 
  		+ \left(b^2 + \left(2\g-\frac{1}{\lambda}\right) b + \g^2\right) \right]  \\
  &\geq \g - \frac{1}{4\lambda}.
  \end{align*}
  The last inequality follows from choosing $a = \g - 1/(2\l)$ and 
  $b = \g + 1/(2\l)$ to minimize the respective
  quadratics. Hence, in expectation, the algorithm pays at least $\gamma - \frac{1}
  {4\lambda}$ at each time $t$. Summing over all times, we get a lower bound of
  $(\gamma - \frac{1}{4\lambda})d$ on the algorithm's cost.  
\end{proof}

In particular, Lemma~\ref{lem:lbd} implies a competitive ratio  of at least
\[\left(\frac{\gamma - 1/(4\lambda)}{\gamma(1+\mu d^{3/2} \gamma)}\right)\sqrt d\]
for chasing a class of functions that includes $f_1,\dots , f_d$.
It is now a simple exercise in choosing constants to get a lower bound on the 
competitiveness of any algorithm for chasing $\k$-well-conditioned functions, 
$\a$-strongly-convex functions, and $\b$-smooth functions.

\begin{proposition}
  \label{fct:lbd}  
          The competitive ratio of any algorithm for chasing
          convex functions with condition number $\k$ is $\Omega(\k^{1/3})$.
Moreover,          the competitive ratio of any algorithm for chasing 
  	$\alpha$-strongly-convex (resp., $\b$-smooth) functions is $\Omega(\sqrt{d})$.
\end{proposition}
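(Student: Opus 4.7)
The plan is to derive all three lower bounds in Proposition \ref{fct:lbd} directly from Lemma \ref{lem:lbd} by appropriate choices of its free parameters $d, \gamma, \lambda, \mu$. Recall that the lemma guarantees, for any $d, \gamma > 0$ and $\lambda \ge \mu \ge 0$, a request sequence of $(\lambda/\mu)$-well-conditioned, $2\mu$-strongly convex, $2\lambda$-smooth functions for which every algorithm has competitive ratio at least
\[
\frac{\gamma - 1/(4\lambda)}{\gamma\,(1+\mu d^{3/2}\gamma)}\,\sqrt{d}.
\]
Heuristically, this expression is $\Theta(\sqrt d)$ provided (a) $\gamma \gg 1/\lambda$, so the numerator stays $\Theta(\gamma)$, and (b) $\mu d^{3/2}\gamma = O(1)$, so the cumulative hit cost of the candidate OPT (moving to $\gamma(\varepsilon_1,\dots,\varepsilon_d)$ up front) remains dominated by its initial $\gamma\sqrt d$ movement. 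Each of the three claims is obtained by saturating whichever of (a) or (b) is binding.

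For the $\Omega(\kappa^{1/3})$ bound, the condition-number restriction is $\lambda/\mu \le \kappa$. Imposing (a) and (b) simultaneously forces $\mu d^{3/2} \ll \lambda$, i.e., $d^{3/2} \ll \kappa$, so the largest allowed dimension is $d \asymp \kappa^{2/3}$, yielding $\sqrt d \asymp \kappa^{1/3}$. Concretely I would take $\lambda = 1$, $\mu = 1/\kappa$, $\gamma = 1$, and $d = c\kappa^{2/3}$ for a small absolute constant $c$; then substitute into the displayed ratio and verify that the numerator is a constant and the denominator is $1 + O(1)$, so that the ratio is indeed $\Omega(\kappa^{1/3})$.

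For the two $\Omega(\sqrt d)$ claims, only one of $\lambda, \mu$ is constrained and the other may be sent to an extreme. In the $\alpha$-strongly-convex case, set $\mu = \alpha/2$ and let $\lambda \to \infty$: then (a) is automatic and I pick $\gamma \asymp 1/(\alpha d^{3/2})$ to enforce (b). In the $\beta$-smooth case, set $\lambda = \beta/2$ and let $\mu \to 0$: then (b) is automatic and any sufficiently large constant multiple of $1/\beta$ for $\gamma$ enforces (a). In both cases the dimension $d$ is unconstrained by the hypothesis, so the resulting lower bound is $\Omega(\sqrt d)$ for every $d$.

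The only step requiring any genuine thought is the optimization in the well-conditioned case, where one must balance (a) against (b); the two ``moreover'' statements follow immediately by pushing the unconstrained parameter to its limit. I do not anticipate a real obstacle --- the entire proposition is essentially a parameter-tuning corollary of Lemma \ref{lem:lbd}, and the main care needed is to keep track of the constants so that (a) holds strictly (ensuring the numerator is positive) and (b) keeps OPT's hit cost from swamping its movement.
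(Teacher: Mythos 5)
Your proposal is correct and takes essentially the same approach as the paper: both instantiate Lemma~\ref{lem:lbd} with $\lambda=\mu^{-1}=\kappa$-type parameters and $d\asymp\kappa^{2/3}$ for the well-conditioned bound, and both isolate the two ``moreover'' cases by fixing one of $\mu,\lambda$ and tuning $\gamma$ so the numerator stays bounded away from zero; the paper simply picks explicit constants (e.g.\ $\lambda=1/\gamma$ rather than $\lambda\to\infty$, and $\mu=0$ rather than $\mu\to 0$) where you leave the limit implicit, which is cosmetic.
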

\begin{proof}
	For $\k$-strongly convex functions,	
	apply Lemma~\ref{lem:lbd} with dimension $d = \k^{2/3}$, constants 
	$\gamma = \lambda = 1$ and $\mu = \kappa^{-1} = d^{-3/2}$.
	This shows a gap of $\Omega(\sqrt{d}) = \Omega(\k^{1/3})$.
	For $\a$-strongly convex functions, choose
	$\mu = \alpha/2$, $\g = 1/(d^{3/2} \alpha)$, and 
	$\lambda = 1/\g = d^{3/2}\alpha$. Finally, 	
	for $\b$-smooth functions, choose $\lambda = \beta/2$, $\g = 1/\b$, and 
	$\mu =0$.
\end{proof}

\subsection{A Lower Bound Example for \texorpdfstring{\MtoM}{Move-towards-Minimizer}}
\label{sec:lbd-m2m}

We show that the \MtoM algorithm is $\Omega(\k)$-competitive, even in
$\R^2$. The essential step of the proof is the following lemma, which
shows that, in a given timestep, $ALG$ can be forced to pay
$\Omega(k)$ times as much as some algorithm $Y=(y_1,\dots, y_t)$ (we
think of $Y$ as a candidate for $OPT$) while at each step $t$, $ALG$ does 
not move any closer to $y_t$.

\begin{lemma} \label{lem:kappa-competitive-step}
Fix $\kappa > 0$. Suppose that $(x_1, \dots, x_{t-1} )$ is 
defined by the \MtoM algorithm and $Y=(y_1,\dots, y_{t-1})$ is a point sequence 
such that $y_{t-1}\neq x_{t-1}$. Define the potential 
\[\Phi_s = \|x_s - y_s\|.\]
Then there is a $\k$-well-conditioned function $f_t$ and a choice of $y_t$ such that
\begin{enumerate}
	\item[(i)] $\cost_t(ALG) \ge \Omega(1)\cdot \Phi_{t-1} \ge \Omega(k) \cdot \cost_t (Y)$ 
	\item[(ii)] $\Phi_t \ge \Phi_{t-1}$, and hence $y_t \neq x_t$.
\end{enumerate} 
\end{lemma}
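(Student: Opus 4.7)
My plan is to construct, in $\R^2$, an explicit $\kappa$-well-conditioned quadratic function $f_t$ and a choice of $y_t$ witnessing the three conditions. By translating and rotating coordinates, I will assume $x_{t-1} = \mathbf{0}$ and $y_{t-1} = \Delta e_1$ where $\Delta := \Phi_{t-1}$. The geometric idea is to exploit the anisotropy that $\kappa$-well-conditioning permits: design $f_t$ as an elongated quadratic whose steep principal axis (smoothness coefficient $\beta$) is aligned with the line from $x_{t-1}$ toward the minimizer $x_t^*$, and whose flat principal axis (strong-convexity coefficient $\alpha = \beta/\kappa$) is perpendicular and passes (approximately) through $y_{t-1}$. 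This exactly inverts the tight case of the upper-bound argument in Theorem~\ref{thm:k-competitive}.

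In this configuration, M2M moves along the steep axis, so the hit cost at $x_t$ is $(\beta/2)\|x_t - x_t^*\|^2$, which is large per unit distance from the minimizer. The balance equation $\|x_t - x_{t-1}\| = f_t(x_t)$ then forces $\|x_t - x_{t-1}\| = \Theta(\Delta)$, giving $\cost_t(ALG) = 2\|x_t - x_{t-1}\| = \Theta(\Delta)$. Setting $y_t := y_{t-1}$, OPT pays no movement cost, and since $y_{t-1}$ lies on the flat axis of $f_t$, $f_t(y_{t-1}) = (\alpha/2)\|y_{t-1} - x_t^*\|^2$. With $\|y_{t-1} - x_t^*\|$ and $\|x_t - x_t^*\|$ of comparable magnitude, the hit-cost ratio is $\beta/\alpha = \kappa$, giving $\cost_t(Y) = \Theta(\Delta/\kappa)$ and establishing condition (i).

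The main obstacle is condition (ii), $\Phi_t \geq \Phi_{t-1}$. In the pure ``Thales'' configuration---where $x_t^*$ lies on the circle with diameter $x_{t-1} y_{t-1}$, ensuring the perpendicularity that places $y_{t-1}$ exactly on the flat axis---a direct computation shows that M2M's step has a nonzero component in the $-e_1$ direction (toward $y_{t-1}$), so $\|x_t - y_{t-1}\|$ can drop below $\Delta$. To remedy this I expect the construction either (a) to slightly tilt $x_t^*$ off the Thales circle so that the line $x_{t-1} x_t^*$ becomes nearly perpendicular to $y_{t-1} - x_{t-1}$---trading a controlled nonzero steep-axis component of $y_{t-1} - x_t^*$ for a Pythagorean gain $\|x_t - y_{t-1}\|^2 \geq \Delta^2 + \Theta(s^2)$---or (b) to choose $y_t$ as a small perturbation of $y_{t-1}$ moving away from $x_t$, spending OPT's movement budget of $O(\Delta/\kappa)$ to close the gap. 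The hardest step will be verifying that one of these perturbations is simultaneously compatible with the $\Omega(\kappa)$ hit-cost gap demanded by (i); the proof then concludes by an explicit choice of $\alpha$, $\beta$, and the placement of $x_t^*$, together with a law-of-cosines calculation comparing $\|x_t - y_t\|^2$ to $\Delta^2$.
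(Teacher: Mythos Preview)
Your overall intuition---exploit anisotropy so that \MtoM pays on the steep axis while the candidate optimum sits on the flat axis---is right, but the specific plan gets stuck on a self-imposed constraint that makes (i) and (ii) incompatible.

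The issue is your stipulation that the \emph{steep} principal axis be aligned with the segment $x_{t-1}x_t^*$. Under this alignment, write $x_{t-1}=0$, $y_{t-1}=\Delta e_1$, and $x_t^*$ at angle $\phi$ from $e_1$ and distance $L$. \MtoM moves a distance $s=L-r$ toward $x_t^*$ (with $r=\|x_t-x_t^*\|$), and the law of cosines gives $\|x_t-y_{t-1}\|^2=\Delta^2+s^2-2\Delta s\cos\phi$. Thus (ii) with $y_t=y_{t-1}$ forces $\cos\phi\le s/(2\Delta)$. But then the steep-axis component of $y_{t-1}-x_t^*$ is $L-\Delta\cos\phi\ge L-s/2=(L+r)/2>r$, so $f_t(y_{t-1})\ge\tfrac{\beta}{2}\big((L+r)/2\big)^2>\tfrac{\beta}{2}r^2=f_t(x_t)$, and the $\Omega(\kappa)$ gap in (i) is destroyed. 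This is not a matter of tuning: for every $\phi$, either (ii) fails or the steep component of $y_{t-1}$ already exceeds that of $x_t$. Your remedy (b) does not help either: the drop in $\Phi$ is $\Theta(\Delta)$ in the Thales configuration, far exceeding the $O(\Delta/\kappa)$ movement budget you allot to $Y$; and combining (a) with (b) still runs into the inequality above.

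The missing idea is to \emph{decouple} the principal axes of $f_t$ from the \MtoM movement direction. The paper places the minimizer at the origin with principal axes along the coordinate axes, puts $y_{t-1}=(2\gamma,0)$ exactly on the flat axis, and puts $x_{t-1}=(\gamma,\gamma)$ on the $45^\circ$ diagonal. Then \MtoM moves along the diagonal direction $(1,1)$, which is \emph{perpendicular} to $y_{t-1}-x_{t-1}=(\gamma,-\gamma)$; the triangle $x_t,x_{t-1},y_t$ has its right angle at $x_{t-1}$, so $\Phi_t\ge\Phi_{t-1}$ by Pythagoras with no perturbation needed. Meanwhile $y_{t-1}$ remains exactly on the flat axis, so $f_t(y_{t-1})=\Theta(\gamma/\kappa)$, while $x_{t-1}$ (and hence $x_t$) carries a $\Theta(\gamma)$ steep-axis contribution, giving the $\Omega(\kappa)$ ratio. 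In short: put the right angle at $x_{t-1}$ rather than at $x_t^*$, and let the flat axis---not the steep one---be the distinguished direction.
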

\begin{proof}
 Observe that if we modify an instance by an isometry the
 algorithm's sequence will also change by the  same isometry. So we
 may assume that 
$x_{t-1} =  (\gamma, \gamma) $ and 
$y_{t-1}= (2\g,0)$, for some $\g > 0$. (See Figure~\ref{fig:mtm-lbd}.) Define
\[f_t(x) = \frac{1}{4 \g}\left(\frac{1}{\k} \cdot x_1^2 + x_2^2\right).\]
Note that $f_t$ is $\k$-well-conditioned. 
It is easily checked that $x_t = \lambda x_{t-1}$ for some $\lambda > \frac{1}{2}$ 
(recall that $x_t$ is chosen to satisfy $f_t(x_t) = \|x_t - x_{t-1}\|$). Thus $ALG$ pays:
\begin{equation}\label{eqn:alg-cost}
\cost_t (ALG) = 2f_t(x_t) = 2\lambda^2 \cdot f_t(x_{t-1}) \ge \Omega(1) \cdot \gamma
	= \Omega(1)\cdot \Phi_{t-1}.
\end{equation}

\begin{figure}[t]
\begin{center}
\includegraphics[height=1.5in]{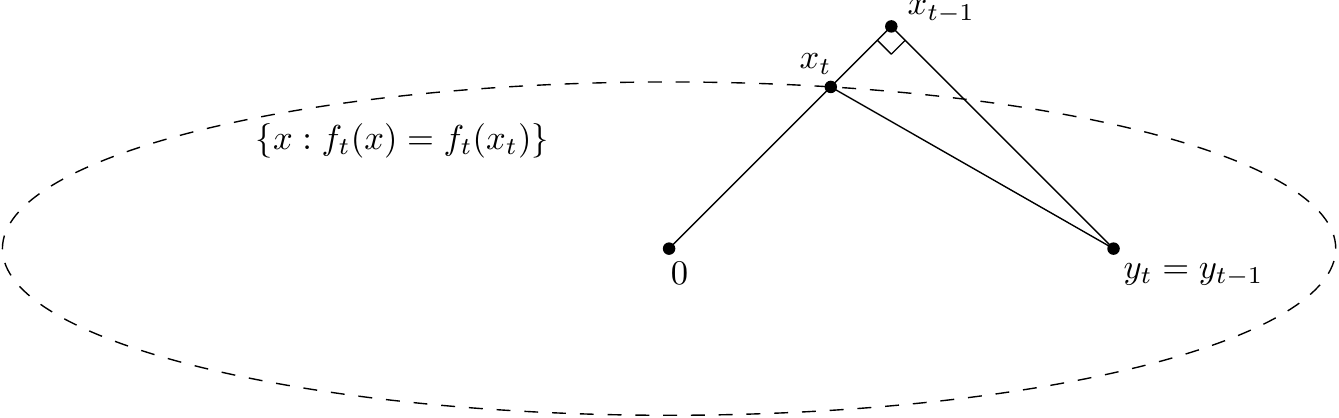}
\end{center}
\caption{Proof of Lemma~\ref{lem:kappa-competitive-step} showing that
  $\MtoM$ is $\Omega(\k)$-competitive.}
\label{fig:mtm-lbd}
\end{figure}

We choose $y_t = y_{t-1}$ so that the cost of $Y$ is:
\begin{equation}\label{eqn:opt-cost}
\cost_t(Y) = f_t(y_t) \le O\left(\frac{1}{\k}\right) \cdot \gamma 
	= O\left(\frac{1}{\k}\right) \cdot \Phi_{t-1}.
\end{equation}
Multiplying (\ref{eqn:opt-cost}) by $\Omega(\k)$ and combining with 
(\ref{eqn:alg-cost}) completes the proof of \emph{(i)}. The statement
in \emph{(ii)} follows from the fact that $x_t, x_{t-1}$ and $y_t$ form a right 
triangle with leg $\Phi_{t-1}$ and hypotenuse $\Phi_t$.
\end{proof}

\begin{proposition}
  \label{prop:m2m-lbd}
The \MtoM algorithm is $\Omega(\k)$ competitive for chasing $\k$-well-conditioned functions.
\end{proposition}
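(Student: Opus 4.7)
The plan is to iterate Lemma~\ref{lem:kappa-competitive-step} along a sufficiently long request sequence so that its per-step cost ratio of $\Omega(\k)$ compounds into a total-cost ratio of the same order. The auxiliary point sequence $Y$ produced along the way is a feasible (though not necessarily optimal) offline strategy, so $\cost(OPT) \le \cost(Y)$, and the desired lower bound $\cost(ALG)/\cost(OPT) = \Omega(\k)$ will follow from $\cost(ALG)/\cost(Y) = \Omega(\k)$.

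The first task is a bootstrap: since by convention $y_0 = x_0$, the precondition $y_{t-1} \neq x_{t-1}$ of the lemma fails at $t=1$. I would break the symmetry by hand with a single preliminary request in $\R^2$. Placing $x_0 = (\gamma,\gamma)$ and taking $f_1$ to be the function $\tfrac{1}{4\gamma}(\tfrac{1}{\k} x_1^2 + x_2^2)$ from the lemma's proof, \MtoM moves to some $x_1 = \lambda x_0$ with $\lambda\in (1/2,1)$; I would then set the alternative point to $y_1 = (2\gamma, 0)$, so that $x_1 \neq y_1$ and both players incur cost $O(\gamma)$ at this step.

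For $t \ge 2$ I would apply Lemma~\ref{lem:kappa-competitive-step} inductively. Given $x_{t-1} \neq y_{t-1}$, an appropriate isometry (with scale parameter $\gamma_t = \|x_{t-1}-y_{t-1}\|/\sqrt 2$) places the pair into the canonical configuration of the lemma, which then yields a $\k$-well-conditioned $f_t$ and a next point $y_t$ satisfying $\cost_t(ALG) \ge \Omega(\k) \cdot \cost_t(Y)$ together with $\Phi_t \ge \Phi_{t-1} > 0$. The latter inequality preserves the invariant $x_t \neq y_t$, so the induction continues for any length $T$ we choose. Summing over $t = 2,\ldots,T$ gives $\sum_{t\ge 2}\cost_t(ALG) \ge \Omega(\k) \sum_{t\ge 2}\cost_t(Y)$.

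The main obstacle is ensuring that the bootstrap step's additive cost does not dilute this ratio. Since the potential is nondecreasing, each $\cost_t(Y)$ for $t\ge 2$ is $\Theta(\Phi_{t-1}/\k) \ge \Theta(\gamma/\k)$, so the cost of $Y$ over those steps grows linearly in $T$ and dominates the bounded first-step cost $O(\gamma)$ once $T = \Omega(\k)$. Combining this with the per-step ratio yields $\cost(ALG)/\cost(Y) = \Omega(\k)$, and hence the claimed $\Omega(\k)$-competitiveness of \MtoM against $OPT$.
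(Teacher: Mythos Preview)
Your proposal is correct and follows essentially the same approach as the paper: bootstrap to separate $x$ and $y$, iterate Lemma~\ref{lem:kappa-competitive-step}, and take $T=\Omega(\k)$ so the bootstrap cost is negligible. The paper's bookkeeping is marginally leaner---it lower-bounds $\cost(ALG)$ directly via $\cost_t(ALG)\ge\Omega(\Phi_{t-1})\ge\Omega(\Phi_0)$ and upper-bounds $\cost(Y)\le O(\cost(ALG)/\k)+1$, rather than invoking a lower bound on $\cost_t(Y)$ (which is true but not stated in the lemma).
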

\begin{proof}
Suppose that before the first timestep, $y_0$ moves to $e_1$ and incurs cost 
$1$. Now consider the instance given by repeatedly 
applying Lemma~\ref{lem:kappa-competitive-step} for $T$ timesteps. For each time $t$, we 
have $\Phi_t \ge \Phi_0$. Thus, 
\[\cost_t(ALG) \ge \Omega(1) \cdot \Phi_{t-1} \ge \Omega(1)\cdot \Phi_0 
= \Omega(1).\] 
Summing over all time, $ALG$ pays $\cost(ALG) \ge \Omega(T)$. 
Meanwhile, our candidate $OPT$ has paid at most $O(\frac{1}{\kappa}) \cdot
 \cost(ALG) + 1$. The proof is completed by choosing $T \ge \Omega(\kappa)$.
\end{proof}

\subsection{A Lower Bound Example for \texorpdfstring{\COBD}{COBD}}
\label{sec:lbd-obd}

We now give a lower bound for the \COBD algorithm.\footnote{The
lower bound example is valid even in the unconstrained setting, where \COBD and \OBD are the same algorithm.}
In the proof of Proposition~\ref{thm:sqrt-k-competitive} we showed that the angle $\theta_t$ between $y_t - x_t$ 
and $x_{t-1} - x_t$ satisfies $-\sec(\theta_t) \le O(\sqrt{\k})$. This bound corresponds directly 
determines to the competitiveness of \COBD. The essence of the lower bound is to give an 
example where $-\sec(\theta_t) \ge \Omega(\sqrt{\k})$.

Much like \MtoM, the key to showing that \COBD is $\Omega(\sqrt{\k})$-competitive 
lies in constructing a single ``bad timestep'' that can be repeated until it 
dominates the competitive ratio. In the case of \COBD, this timestep allows us to
convert the potential into cost to $ALG$ at a rate of $\Omega(\sqrt{\k})$.

\begin{lemma}\label{lem:convert-potential}
Fix $\k \ge 1$. Suppose that $x_t$ is defined by the \COBD algorithm and that
$Y=(y_1,\dots, y_{t-1})$ is a point sequence such that $y_{t-1}\neq x_{t-1}$. 
Define the potential 
\[\Phi_s = \|x_s - y_s\|.\]
Then there is a $\k$-well-conditioned function $f_t$ and a choice of $y_t$ such that
\begin{enumerate}
	\item[(i)] $\cost_t(ALG) \ge \Omega(\frac{1}{\sqrt \k}) \Phi_{t-1}$.
        \item[(ii)] $\cost_t(ALG) \ge \Omega(\sqrt{\k}) (-\Delta_t \Phi)$.
	\item[(iii)] $\cost_t(Y) = 0$.
\end{enumerate} 
\end{lemma}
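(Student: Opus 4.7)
I would construct $f_t$ as an anisotropic quadratic that saturates the AM-GM step in the proof of Proposition~\ref{thm:sqrt-k-competitive}, forcing the angle $\theta_t$ between $\nabla f_t(x_t)$ and $y_t - x_t$ to satisfy $-\cos\theta_t = \Theta(1/\sqrt{\kappa})$. By translation, assume $y_{t-1} = 0$ and let $R := \|x_{t-1}\| > 0$. I set $y_t := y_{t-1} = 0$, which gives (iii) automatically as long as $f_t$ has its minimizer at the origin. Pick orthonormal $u_1, u_2$ whose span contains $x_{t-1}$, write $x_{t-1} = R(\sin\beta\cdot u_1 + \cos\beta\cdot u_2)$, and define
\[
  f_t(x) := \tfrac{\alpha}{2}\|x\|^2 + \tfrac{(\kappa-1)\alpha}{2}\langle x, u_1\rangle^2.
\]
The Hessian has eigenvalues $\kappa\alpha$ (multiplicity $1$, along $u_1$) and $\alpha$ (multiplicity $d-1$), so $f_t$ is $\kappa$-well-conditioned with minimum $0$; the scale $\alpha > 0$ and the angle $\beta$ remain free.

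Next, I would compute the COBD iterate. Reflections that fix $\mathrm{span}(u_1, u_2)$ pointwise preserve both $f_t$ and $x_{t-1}$, so $x_t \in \mathrm{span}(u_1, u_2)$; write $x_t = a_1 u_1 + a_2 u_2$. Lemma~\ref{lem:conic-comb} in the unconstrained case $K = \R^d$ gives $x_{t-1} - x_t = \mu\,\nabla f_t(x_t) = \mu\alpha(\kappa a_1\, u_1 + a_2\, u_2)$ for some $\mu > 0$, together with the balance $\|x_t - x_{t-1}\| = f_t(x_t)$. The key move is to force $a_2 = \sqrt{\kappa}\, a_1$: this is precisely the ratio that makes the AM-GM inequality $f_t(x_t) + \alpha\|x_t - y_t\|^2 \ge 2\sqrt{\alpha f_t(x_t)\|x_t - y_t\|^2}$ tight, hence gives $-\cos\theta_t = \Theta(1/\sqrt{\kappa})$. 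Substituting this relation and setting $q := \alpha a_1$ reduces everything to a one-parameter system; $\sin^2\beta + \cos^2\beta = 1$ then determines $a_1$ as a function of $q$. I would then choose $q = 1/\kappa$, yielding $a_1 = \Theta(R/\sqrt{\kappa})$ with consistent $\alpha$ and $\beta \in (0, \pi/2)$.

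With these choices, (i) is immediate: $\cost_t(ALG) = 2 f_t(x_t) = 2\kappa\alpha a_1^2 = 2q\kappa a_1 = \Theta(R/\sqrt{\kappa}) = \Theta(\Phi_{t-1}/\sqrt{\kappa})$. For (ii), compute $\Phi_t = \|x_t\| = a_1\sqrt{1+\kappa}$; Taylor-expanding the closed form $a_1 = R/\sqrt{(1 + q\kappa)^2 + \kappa(1+q)^2}$ at $q = 1/\kappa$ gives $\Phi_t = R\bigl(1 - \Theta(1/\kappa)\bigr)$, so $-\Delta_t\Phi = \Theta(R/\kappa)$ and $\cost_t(ALG)/(-\Delta_t\Phi) = \Theta(\sqrt{\kappa})$.

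The delicate step---and the main obstacle---is extracting this $\Theta(1/\kappa)$ cancellation in $\Phi_{t-1} - \Phi_t$. Both quantities are $\Theta(R)$, so naive bounds would give $-\Delta_t\Phi = O(R)$ and hence only $\Omega(1)$ rather than $\Omega(\sqrt{\kappa})$ in (ii); one must expand the reciprocal square root carefully to first order in $1/\kappa$ (keeping the factor $\sqrt{\kappa/(\kappa+1)}$ that appears in $\mu$ to the right order). Verifying the remaining consistency conditions (that $\mu > 0$, and that the derived $\alpha$ and $\beta$ actually satisfy $\sin^2\beta + \cos^2\beta = 1$ for the chosen $q = 1/\kappa$) is routine bookkeeping.
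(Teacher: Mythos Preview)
Your approach is correct and essentially the same as the paper's: both construct an anisotropic quadratic with condition number $\kappa$ centered at $y_t=y_{t-1}=0$, and both force the \COBD iterate $x_t$ onto the ray where the weak-direction coordinate equals $\sqrt{\kappa}$ times the strong-direction coordinate (your $a_2=\sqrt{\kappa}\,a_1$ is exactly the paper's $x_t=\gamma[\sqrt{\kappa},1]^T$), which is what makes $-\cos\theta_t=\frac{2\sqrt{\kappa}}{\kappa+1}=\Theta(1/\sqrt{\kappa})$. The only difference is bookkeeping: the paper pins down $\alpha$ by fixing $\|x_t-x_{t-1}\|=\tfrac{1}{2\sqrt{\kappa}}\Phi_{t-1}$ and invoking the intermediate value theorem, then derives~(ii) geometrically via the law of cosines, whereas you solve for the parameters explicitly (with $q=\alpha a_1=1/\kappa$) and extract~(ii) by Taylor-expanding the closed form for $\Phi_t/\Phi_{t-1}$.
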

\begin{proof}
Observe that modifying an instance by an isometry will modify the algorithm's sequence 
by the same isometry. After applying an appropriate isometry, we will define 
\[f_t(x) = \a(x_1^2 + \k x_2^2)\]
for some $\a >0$ to be chosen later and $y_t = y_{t-1}$.
We claim that this can be done such that:
\begin{enumerate}
\item[(a)] $y_t = y_{t-1} = 0$,
\item[(b)] $\|x_t - x_{t-1}\| = \frac{1}{2\sqrt \k} \|x_{t-1}-y_{t-1}\|$ (which in turn is equal to   
$\frac{1}{2\sqrt \k} \|x_{t-1}\|$).
\item[(c)] $x_t = \g  \left[\begin{matrix}\sqrt \k \\ 1\end{matrix}\right]$ for some $\gamma>0$,
\end{enumerate}
For any $\alpha > 0$, there is point a $x_\alpha$ on the ray
$\left\lbrace\gamma \left[\begin{matrix}\sqrt \k \\ 1\end{matrix}\right]: 
\gamma >0\right\rbrace$ such that 
$f_t(x_\alpha) = \frac{1}{2\sqrt \k} \|x_{t-1}-y_{t-1}\|$. Let 
\[x^-_\alpha:= x_\alpha + \left(\frac{1}{2\sqrt \k} \|x_{t-1}-y_{t-1}\|\right)
	\frac{\grad f_t(x_\alpha)}{\|\grad f_t(x_\alpha)\|}.\]
Note that $x^-_\alpha$ is defined so that applying \COBD to $x^-_\alpha$ and $f_t$ outputs
the point $x_\alpha$. Then $\|x^-_\alpha\|$ increases continuously from 
$\frac{1}{2\sqrt \k} \|x_{t-1}-y_{t-1}\|$ to $\infty$ as $\alpha$ ranges from $0$ to $\infty$. 
Choose $\a$ such that $\|x^-_\alpha\| =  \|x_{t-1} - y_{t-1}\|$, and pick the isometry that maps 
$y_{t-1}$ to $0$ and $x_{t-1}$ to $x^-_\alpha$. The claim follows.

Now, (a) and (b) imply that
\[\cost_t(ALG) = 2\|x_t-x_{t-1}\| = \frac{1}{\sqrt{\k}}\|x_{t-1}\| = \frac{1}{\sqrt{\k}} \Phi_{t-1}.\]
This proves $(i)$. Furthermore, 
(b) and the triangle inequality give
\begin{equation}
\label{eq:obd-lbd-1}
\|x_t\| 
	\ge \|x_{t-1}\| - \|x_{t-1} - x_t\|
	= (2\sqrt \k- 1) \cdot \|x_{t-1}-x_t\|
	\ge \sqrt\k\cdot  \|x_{t-1}-x_t\|.
\end{equation}

There are $\eta, \nu > 0\footnote{We omit the exact values (which depend on $\k$ and 
$\|x_{t-1}\|$) as $\nu$ cancels out in the next step.}$
such that
\[x_{t-1} - x_t 
	= \eta \grad f_t(x_t) 
	= \nu\left[\begin{matrix}1 \\ \sqrt{\k} \end{matrix}\right].
\]
Letting $\theta_t$ be the angle between $x_{t-1}-x_t$ and $y_t-x_t=-x_t$ (cf.~\Cref{fig:obd-analysis}) we have
\begin{equation} 
\label{eq:obd-lbd-2}
-\cos(\theta_t) 
	= - \frac{\la x_{t-1}-x_t, -x_t \ra}{\|x_{t-1}-x_t\|\cdot \|-x_t\|} 
	= \frac{2\sqrt{\k}}{1+\k} 
	\le \frac{2}{\sqrt{\k}}.
\end{equation}
We now mirror the argument used in the proof of Theorem~\ref{thm:sqrt-k-competitive} relating $\cost_t(ALG)$ to $\cos(\theta_t)$.
\begin{align*}
\cost_t (ALG) &= 2\|x_t-x_{t-1}\|   \\
	&= \frac{\|x_{t-1}\|+\|x_t\| }{\|x_{t-1}-x_t\| - 2\|x_t\|\cos(\theta_t)}
		\cdot (-\Delta_t \Phi) \tag{Law of Cosines, substitution}\\
	&\ge \frac{\|x_t\|}{(1/\sqrt \k) \|x_t\| + (4/\sqrt \k)\cdot \|x_t\|}
		\cdot (-\Delta_t \Phi) \tag{by (\ref{eq:obd-lbd-1}) and (\ref{eq:obd-lbd-2})}\\
	&= \frac{\sqrt\k}{5} (-\Delta_t \Phi).
\end{align*}
Finally, observe that $\cost_t(Y)= f_t(0) = 0$.
\end{proof}

We can now get a lower bound on the competitiveness of \COBD.

\begin{proposition}
  \label{prop:obd-lbd}
\COBD is $\Omega(\sqrt{\k})$ competitive for chasing $\k$-well-conditioned functions.
\end{proposition}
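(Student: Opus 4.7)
The plan is to mimic the strategy used for the \MtoM lower bound (Proposition~\ref{prop:m2m-lbd}): first create an initial separation between the algorithm's point and a chosen reference sequence $Y$ (a proxy for $OPT$) at unit cost, and then iteratively apply the ``bad timestep'' of Lemma~\ref{lem:convert-potential} for a total of $T = \Theta(\k)$ steps. I will argue that this yields $\cost(ALG) = \Omega(\sqrt{\k})$ while $\cost(Y) = O(1)$, which gives the desired competitive ratio lower bound of $\Omega(\sqrt{\k})$.

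For the setup step, before the first request I charge $Y$ a unit cost to ``move'' so that $\|x_0 - y_0\| = 1$, i.e., $\Phi_0 = 1$, and the hypothesis $y_0 \ne x_0$ of Lemma~\ref{lem:convert-potential} is met. This is precisely the one-line setup used at the start of the proof of Proposition~\ref{prop:m2m-lbd}. Then for each $t = 1, 2, \dots, T$, I invoke Lemma~\ref{lem:convert-potential} to obtain a $\k$-well-conditioned $f_t$ and a choice of $y_t$ such that
\[
\text{(i)}~~\cost_t(ALG) \ge \tfrac{c_1}{\sqrt{\k}}\,\Phi_{t-1}, \qquad
\text{(ii)}~-\Delta_t\Phi \le \tfrac{1}{c_2\sqrt{\k}}\,\cost_t(ALG), \qquad
\text{(iii)}~\cost_t(Y) = 0.
\]

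Chaining (i) into (ii) gives the multiplicative recursion $\Phi_t \ge \bigl(1 - \frac{c_1}{c_2\k}\bigr)\,\Phi_{t-1}$, so $\Phi_t \ge \Phi_0 (1 - O(1/\k))^t$. For any $t \le T = \Theta(\k)$ this remains $\Omega(1)$, which simultaneously (a) preserves the hypothesis $y_{t-1} \ne x_{t-1}$ needed to re-apply the lemma, and (b) ensures that $\cost_t(ALG) \ge \Omega(1/\sqrt{\k})$ at every step by (i). Summing over $t$ yields $\cost(ALG) \ge T \cdot \Omega(1/\sqrt{\k}) = \Omega(\sqrt{\k})$. Meanwhile, by (iii) and the setup, $\cost(Y) \le 1 + 0 = O(1)$. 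Dividing gives the desired $\Omega(\sqrt{\k})$ bound on the competitive ratio.

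The main technical point of this plan is the geometric control of the potential: the adversary must keep $\Phi_t$ bounded below by a constant over all $T = \Theta(\k)$ iterations so that the per-step loss to $ALG$ stays $\Omega(1/\sqrt{\k})$ long enough to accumulate $\Omega(\sqrt{\k})$ total. The multiplicative recursion extracted from parts (i) and (ii) of Lemma~\ref{lem:convert-potential} is exactly balanced to achieve this: the $\sqrt{\k}$ factor in (i) that makes $ALG$'s per-step cost small is the same factor in (ii) that makes $\Phi$'s per-step drop small, so the two compound to a $1/\k$ relative drop per step and the potential survives $\Theta(\k)$ repetitions.
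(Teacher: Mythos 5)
Your overall plan---a unit setup cost for $Y$ followed by $T=\Theta(\kappa)$ applications of Lemma~\ref{lem:convert-potential}---is the same as the paper's, but the step where you ``chain (i) into (ii)'' to obtain the recursion $\Phi_t \ge \bigl(1-O(1/\kappa)\bigr)\Phi_{t-1}$ is not valid. Property (ii) upper-bounds the potential drop by $\frac{1}{c_2\sqrt{\kappa}}\cost_t(ALG)$, so to control $-\Delta_t\Phi$ you need an \emph{upper} bound on $\cost_t(ALG)$ in terms of $\Phi_{t-1}$; property (i) supplies only a \emph{lower} bound, and substituting a lower bound into the right-hand side of (ii) proves nothing. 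Concretely, the stated conclusions of the lemma are consistent with, say, $\cost_t(ALG)=\Phi_{t-1}$ and $-\Delta_t\Phi = \frac{1}{c_2\sqrt{\kappa}}\Phi_{t-1}$, in which case the potential collapses after $O(\sqrt{\kappa})$ rounds and your ``cost $\Omega(1/\sqrt{\kappa})$ per step for $\Theta(\kappa)$ steps'' accounting only yields $\Omega(1)$ total.

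The gap can be repaired in two ways. The paper's route is a case split on $\Phi_{\min}:=\min_t\Phi_t$: if $\Phi_{\min}\ge\frac12$, then summing (i) over $T=\kappa$ steps gives $\cost(ALG)\ge\frac{T}{2\sqrt{\kappa}}=\Omega(\sqrt{\kappa})$; otherwise the potential drops by at least $\frac12$ in total, and summing (ii) charges that drop to $ALG$ at rate $\Omega(\sqrt{\kappa})$, again giving $\cost(ALG)=\Omega(\sqrt{\kappa})$. This handles exactly the fast-collapse scenario your recursion cannot exclude, observing that a fast potential drop is itself expensive. Alternatively, you could strengthen the lemma you invoke: its proof in fact establishes $\cost_t(ALG)=\frac{1}{\sqrt{\kappa}}\Phi_{t-1}$ with equality, and with that matching upper bound your multiplicative recursion and the remainder of your argument do go through.
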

\begin{proof}
Suppose that before the first timestep, $y_0$ moves to $e_1$ and incurs cost 
$1$. Now consider the instance given by repeatedly 
applying Lemma~\ref{lem:convert-potential} for $T$ timesteps. $\cost(OPT)=1$, so it remains to show that 
$\cost(ALG) = \Omega(\sqrt k)$.
Let $\Phi_{min}:= \min\{\Phi_1, \dots, \Phi_T\}$. 
Using $(i)$ and summing over all time we have 
\begin{equation}\label{eq:obd-lbd-3}
\cost(ALG) \ge \frac{1}{\sqrt{\k}} \sum_{t=0}^{T-1} \Phi_t \ge \frac{T}{\sqrt{\k}} \Phi_{min}.
\end{equation}
Using $(ii)$ and summing over all time (and using that $ALG$ incurs nonnegative cost at each step),
\begin{equation}\label{eq:obd-lbd-4}
\cost(ALG) \ge \Omega(\sqrt \k)(\Phi_0 - \Phi_{min}) =  \Omega(\sqrt \k)(1- \Phi_{min})
\end{equation}
If $\Phi_{min} \ge \frac{1}{2}$ then $\cost(ALG) \ge \frac{T}{2\sqrt k}$ by 
(\ref{eq:obd-lbd-3}), else  $\Phi_{min} < \frac{1}{2}$ and we have 
$\cost(ALG) \ge \Omega(\sqrt k)$ by (\ref{eq:obd-lbd-4}).
Choosing $T = \kappa$ completes the proof.
\end{proof}

\section{Constrained \MtoM}
\label{sec:constrained-m2m}

We give a generalized version of the \MtoM algorithm for the constrained setting where the action space $K\sse \R^d$ is an arbitrary convex set. This algorithm achieves the same $O(\sqrt{\kappa})$-competitiveness respectively as in the unconstrained setting.

The idea is to move towards $x_{K,t}^*$, the minimizer of $f_t$ among feasible points, rather than the global minimizer. The proof of the algorithm's competitiveness proceeds similarly to the proof in the unconstrained setting. The difference is that it takes more care to show that $f(x_t) \le O(\k) f(y_t)$ in Case II.

\begin{quote}
  \textbf{The Constrained M2M Algorithm.} Suppose we are at position $x_{t-1}$ and receive the
  function $f_t$. Let $x^*_{K,t} := \arg\min_{x\in K} f_t(x)$ denote the
  minimizer of $f_t$ among points in $K$. Consider the line segment with endpoints
  $x_{t-1}$ and $x^*_{K,t}$, and let $x_t$ be the unique point on this
  segment with $\|x_t - x_{t-1}\| = f_t(x_t)-f_t(x_{K,t}^*)$.\footnote{Such a 
  point is always unique when $f_t$ is strictly convex.} 
  The point $x_t$ is the one played by the algorithm.
\end{quote}

Note that we assume that the global minimum value of $f_t$ is $0$, as before. However, the minimum value of $f_t$ on the action space $K$ could be strictly positive.

\begin{proposition}\label{prop: mtm-constrained}
  With $c = 25(2+2\sqrt 2)$, for each $t$,
  \begin{gather}
    \cost_t(ALG) + 2\sqrt 2 \cdot \Delta_t \Phi \le c\cdot \k \cdot
    \cost_t(OPT). \label{eq:c-1}
  \end{gather}
  Hence, the constrained \MtoM algorithm is $c \k$-competitive.
\end{proposition}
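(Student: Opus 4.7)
My proof would follow the structure of the proof of Theorem~\ref{thm:k-competitive}, using the same potential $\Phi_t := \|x_t - y_t\|$ and reducing (as there) to the one-step inequality
\[\cost_t(ALG) + 2\sqrt{2}\,\DTPhi \le c\kappa\, f_t(y_t),\]
where $\DTPhi := \|x_t - y_t\| - \|x_{t-1} - y_t\|$ and the $\|y_t - y_{t-1}\|$ term is charged to OPT's movement cost. Let $A := f_t(x_{K,t}^*)$; the algorithm's rule gives $\|x_t - x_{t-1}\| = f_t(x_t) - A$, hence $\cost_t(ALG) = 2f_t(x_t) - A \le 2 f_t(x_t)$. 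Translating so that $x_{K,t}^* = 0$ (so $x_t = \gamma x_{t-1}$ for some $\gamma \in [0,1]$ by the algorithm's definition) and applying the Structure Lemma (Lemma~\ref{lem:structure}) with $x = x_{t-1}$ and $y = y_t$ produces the usual two cases.

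Case~I, $\DTPhi \le -\tfrac{1}{\sqrt{2}}\|x_t - x_{t-1}\|$, cancels the $2\|x_t - x_{t-1}\|$ contribution to $\cost_t(ALG)$, leaving only $A$. Since $y_t \in K$ and $x_{K,t}^*$ is the $f_t$-minimizer over $K$, we have $A \le f_t(y_t)$, which gives the required bound immediately.

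Case~II, $\|y_t - x_{K,t}^*\| \ge \tfrac{1}{\sqrt{2}}\|x_t - x_{K,t}^*\|$, is the main technical obstacle. The unconstrained proof used the stationarity of the global minimizer $x_t^*$ to obtain $f_t(x_t) \le 2\kappa f_t(y_t)$ from well-conditionedness; but $x_{K,t}^*$ is generally a boundary point of $K$ and need not be stationary for $f_t$, so I must work harder. Letting $D := \|x_t - x_{K,t}^*\|$, I would apply smoothness at $x_{K,t}^*$ and Cauchy--Schwarz to obtain
\[f_t(x_t) \le A + \|\nabla f_t(x_{K,t}^*)\|\cdot D + \tfrac{\alpha\kappa}{2} D^2,\]
then bound $\|\nabla f_t(x_{K,t}^*)\| \le \sqrt{2\alpha\kappa A}$ via Lemma~\ref{lem:grad-bound}, and apply an AM--GM inequality to the resulting cross term to absorb it into a multiple of $\kappa A$ plus a multiple of $\alpha D^2$. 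Combined with the strong-convexity lower bound
\[f_t(y_t) \ge A + \tfrac{\alpha}{2}\|y_t - x_{K,t}^*\|^2 \ge A + \tfrac{\alpha}{4}D^2,\]
where the first step uses $y_t \in K$ together with the KKT-style inequality $\langle \nabla f_t(x_{K,t}^*), y_t - x_{K,t}^*\rangle \ge 0$, and the second step uses the Case~II distance inequality, these give $f_t(x_t) \le O(\kappa)\, f_t(y_t)$. The bounds $\DTPhi \le \|x_t - x_{t-1}\| \le f_t(x_t)$ and $\cost_t(ALG) \le 2 f_t(x_t)$ then yield $\cost_t(ALG) + 2\sqrt{2}\,\DTPhi \le (2+2\sqrt{2}) f_t(x_t) \le c\kappa f_t(y_t)$, completing the proof.
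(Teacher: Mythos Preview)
Your proposal is correct and follows the same global structure as the paper's proof (same potential, same reduction to $\cost_t(ALG)+2\sqrt2\,\DTPhi\le c\kappa f_t(y_t)$, same use of Lemma~\ref{lem:structure} centered at $x^*_{K,t}$, same Case~I). In Case~I you are actually more careful than the paper: you correctly observe that the residual $A=f_t(x^*_{K,t})$ survives the cancellation and must be bounded by $f_t(y_t)$ via optimality of $x^*_{K,t}$ on $K$.

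The genuine difference is in Case~II. The paper translates to the \emph{global} minimizer $x^*_t=0$, splits into the subcase $f_t(x_t)\le 25\kappa A$ (trivial since $f_t(y_t)\ge A$) and the complementary subcase, and in the latter uses the sandwich $\frac{\alpha_t}{2}\|z\|^2\le f_t(z)\le\frac{\alpha_t\kappa}{2}\|z\|^2$ together with repeated triangle inequalities to deduce $\|y_t\|\ge\frac{1}{5}\|x_t\|$, hence $f_t(x_t)\le 25\kappa f_t(y_t)$. Your route stays centered at $x^*_{K,t}$: you upper-bound $f_t(x_t)$ by smoothness plus Lemma~\ref{lem:grad-bound} plus AM--GM, and lower-bound $f_t(y_t)$ by strong convexity plus the first-order optimality condition $\langle\nabla f_t(x^*_{K,t}),y_t-x^*_{K,t}\rangle\ge0$. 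Your argument is a bit more analytic, avoids the case split, and in fact yields a sharper constant (one gets $f_t(x_t)\le(1+2\kappa)f_t(y_t)$ rather than $25\kappa$); the paper's argument is more elementary in that it never invokes Lemma~\ref{lem:grad-bound} or KKT, relying only on the quadratic sandwich around the global minimizer. Either approach establishes the stated proposition.
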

\begin{proof}
As in the proof of Theorem~\ref{thm:k-competitive}, we begin by applying the structure lemma. This time, we use $x^*_{K,t}$ to be the origin. The proof of Case I is identical. 

\medskip\textbf{Case II:} Suppose that $\|y_t - x^*_{K,t}\| \ge \frac{1}{\sqrt{2}} \|x_t - x^*_{K,t}\|$. Let $x^*_t := \arg\min_{x} f_t(x)$ denote the global minimizer of $f_t$. As before, we assume $f_t(x^*_t) = 0$, and we translate such that $x^*_t = 0$. 

We now show that $f_t(x_t) \le 25\k f_t(y_t)$. 
If $f_t(x_t) \le 25 \k f_t(x^*_{K,t})$, then since $f(y_t) \ge f_t(x^*_{K,t})$, we are done. So suppose that $f_t(x_t) > 25 \k f_t(x^*_{K,t})$. Now strong convexity and smoothness imply
\begin{equation}
\|x_t\|^2 \ge \frac{2}{\k \a_t} f_t(x_t) \ge 25\cdot \frac{2}{\a_t} f(x^*_{K,t}) \ge 25 \|x^*_{K,t}\|^2.
\end{equation}
Thus $\|x_t\| \ge 5 \|x^*_{K,t}\|$. 
One application of the triangle inequality gives
$\|x_t - x^*_{K,t}\| \ge \|x_t\| - \|x^*_{K,t}\| \ge 4\|x^*_{K,t}\|$.
Using the triangle inequality again, we get 
\begin{equation}
\|x_t\| \le \|x_t-x^*_{K,t}\| + \|x^*_{K,t}\| \le \frac{5}{4} \|x-x^*_{K,t}\|,
\end{equation}
and
\begin{equation}
\|y_t\| 
	\ge \|y-x^*_{K,t}\| - \|x^*_{K,t}\| 
	\ge \left(\frac{1}{\sqrt2} - \frac{1}{4}\right) \|x-x^*_{K,t}\| 
	\ge \frac{1}{4} \|x-x^*_{K,t}\| 
\end{equation}
Combining these two, we have
\begin{equation}
\|y_t\| \ge \|x_t\| \ge \frac{1}{5}\|x_t\|
\end{equation}
Finally, we have 
\begin{equation}\label{eq:general-norm-1}
	f_t(x_t)
		\le \frac{\alpha_t \kappa}{2} \|x_t\|^2
		\le \frac{5\alpha_t\k}{2} \|y_t\|^2
		\le 25 \kappa \cdot f_t(y_t).
\end{equation}
We now proceed as in the proof of Theorem~\ref{thm:k-competitive}. 
\end{proof}

\section{A Structure Lemma for General Norms}
\label{sec:general-norm}

We can extend the $O(\k)$-competitiveness guarantee for \MtoM for all norms, by 
replacing Lemma~\ref{lem:structure} by the following Lemma~\ref{lem:general-norm} 
in \Cref{thm:k-competitive}, and changing some of the constants in the latter 
accordingly.

\begin{lemma}
  \label{lem:general-norm}
  Fix an arbitrary norm $\|\cdot \|$ on $\R^d$.
  Given any scalar $\g\in [0,1]$ and any two vectors $x,y\in \R^d$, at least one of the following holds:
  \begin{enumerate}
  \item[(i)] $ \|y-\g x\| - \|y-x\| \le -\frac{1}{2} \|x-\g x\|$.
  \item[(ii)] $\|y\| \ge \frac{1}{4}\|\g x\|$.
  \end{enumerate}
\end{lemma}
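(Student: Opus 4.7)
The plan is to replace the Euclidean law-of-cosines argument from Lemma~\ref{lem:structure} with two ingredients that are available in any normed space: the convexity of the scalar function $f(t) := \|y - tx\|$ on $[0,1]$ (as the composition of the norm with an affine map in $t$), and the reverse triangle inequality. A simple case split on the size of $\|y\|$ compared to $\|x\|/4$ determines which of the two conclusions to aim for.

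\textbf{Case 1: $\|y\| \ge \|x\|/4$.} Since $\gamma \in [0,1]$, $\|\gamma x\| = \gamma\|x\| \le \|x\|$, so immediately $\|y\| \ge \|x\|/4 \ge \|\gamma x\|/4$, giving (ii).

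\textbf{Case 2: $\|y\| < \|x\|/4$.} I would prove (i). Writing $\gamma = (1-\gamma)\cdot 0 + \gamma \cdot 1$ and invoking convexity of $f$,
\[
\|y - \gamma x\| \;\le\; (1-\gamma)\,\|y\| + \gamma\,\|y - x\|.
\]
Subtracting $\|y - x\|$ from both sides gives
\[
\|y - \gamma x\| - \|y - x\| \;\le\; (1-\gamma)\bigl(\|y\| - \|y - x\|\bigr).
\]
The reverse triangle inequality says $\|y - x\| \ge \|x\| - \|y\|$, so $\|y\| - \|y - x\| \le 2\|y\| - \|x\|$. The Case 2 hypothesis $\|y\| < \|x\|/4$ makes this strictly less than $-\|x\|/2$, and hence
\[
\|y - \gamma x\| - \|y - x\| \;<\; -\tfrac{1}{2}(1-\gamma)\|x\| \;=\; -\tfrac{1}{2}\|x - \gamma x\|,
\]
which is (i).

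The only conceptual obstacle is spotting the right replacement for the law of cosines, namely one-dimensional convexity of $t \mapsto \|y - tx\|$; once this is in hand the argument is only a few lines and the constants fall out cleanly. The degradation of the constants from $1/\sqrt{2}$ in Lemma~\ref{lem:structure} to $1/2$ and $1/4$ here can be read as the price of giving up the Hilbert-space structure: the sharp Pythagorean relation is replaced by the (looser) reverse triangle inequality, which loses exactly a factor of $2$ when converting a bound on $\|y\|$ into a bound on $\|y\| - \|y - x\|$.
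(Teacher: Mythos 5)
Your proof is correct, and it takes a genuinely different route from the paper's. The paper also reduces to the one-variable function $\tau \mapsto \|\tau x - y\|$, but it bounds the \emph{derivative} of that function from below by $1 - 2\|y\|/\tau$ using the dual-norm/subgradient characterization of $\grad\|\tau x - y\|$ together with H\"older's inequality, and then integrates over $\tau \in [\gamma, 1]$; your argument instead applies the three-point convexity inequality to $t \mapsto \|y - tx\|$ at $t = 0, \gamma, 1$ and finishes with the reverse triangle inequality. Your version is more elementary: it needs no dual norms and no differentiability of the norm (which the paper's derivative computation implicitly assumes and would otherwise have to patch with one-sided derivatives or subgradients), and it even yields a slightly stronger statement, namely that (i) holds whenever $\|y\| < \|x\|/4$, a condition independent of $\gamma$ and weaker than the negation of (ii). The paper's integral formulation, for its part, makes the constant trade-off transparent (the integrand $1 - 2\|y\|/\tau$ exceeds $1/2$ exactly when $\|y\| < \tau/4$), but buys nothing extra here. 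One trivial edge case: your final strict inequality reads $0 < 0$ when $\gamma = 1$; the non-strict chain still gives (i) since both sides vanish, so nothing is lost, but the $<$ should be a $\le$.
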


\begin{proof}
As in the proof of Lemma~\ref{lem:structure} we assume (ii) does not hold and 
  show that (i) does. WLOG, let $\|x\|=1$. Let $\|\cdot\|_*$ denote the dual norm. 
  Let $z_\tau := \grad \|\tau x - y\|= \arg \max_{\|z\|_*\le 1} 
  \langle \tau x - y, z\rangle$ and note that $\langle z_\tau, \tau x - y\rangle =
    \|\tau x -y \|$. Then, 
  \begin{align*}
  	\frac{d}{d\tau} \|\tau x - y\|
  		&= \left\langle \grad\|\tau x - y\|,\frac{d}{d\tau}(\tau x-y)\right\rangle\\
  		&= \left\langle z_\tau , x\right\rangle\\
   		&= \frac{\langle z_\tau, \tau x - y \rangle
   			+ \langle z_\tau, y \rangle }{\tau} \\
   		&\ge \frac{\|\tau x - y\| - \|z_\tau\|_* \|y\|}{\tau}
   			\tag{definition of $z_\tau$ and H\"older}\\
   		&\ge \frac{(\tau - \|y\|) - 1\cdot \|y\|}{\tau}
   		= 1 - \frac{2\|y\|}{\tau}.
   			\tag{triangle inequality}
  \end{align*}
Given the bound $\frac{d}{d\tau} \|\tau x - y\| \ge 1 - \frac{2\|y\|}{\tau}$ we can say:
\begin{gather}
  \|y - \gamma x\| - \|y-x\| = - \int_\gamma^1 \frac{d}{d\tau}
  \big(\|\tau x - y\|\big) \, d\tau \le - \int_\gamma^1 \left( 1 -
    \frac{2\|y\|}{\tau} \right)\, d\tau. \label{eq:5}
\end{gather}
Since by assumption condition~(ii) does hold and $\|x\| = 1$, we know that $\|y\| <
\frac14 \|\g x\| = \frac14 \g$. Hence $\frac{2\|y\|}{\tau} <
\frac{\g/2}{\tau} \leq 1/2$ for $\tau \geq \g$. The integrand in~(\ref{eq:5}) is
therefore at least half, and hence the result is at most $-\frac{1}{2}(1-\gamma)
	= -\frac{1}{2}\|x - \gamma x\|$. Hence the proof.
\end{proof}

\end{document}